\newcommand{\newc}{\newcommand}
\newc{\beq}{\begin{equation}}
\newc{\eeq}{\end{equation}}
\newc{\Tr}{\mbox{{\rm Tr}}}
\newc{\secu}{\mbox{{\rm Sc}}}
\newtheorem{theorem}{Theorem}
\newtheorem{lemma}{Lemma}[theorem]
\newtheorem{proposition}{Proposition}[theorem]
\numberwithin{theorem}{section}
\numberwithin{corollary}{section}
\numberwithin{lemma}{section}
\numberwithin{proposition}{section}
\numberwithin{figure}{section}			
\numberwithin{equation}{section}		
\newcolumntype{b}{X}
\newcolumntype{s}{>{\hsize=.5\hsize}X}
\begin{document}

\title{On the moments of characteristic polynomials}

\author[1]{ Bhargavi Jonnadula}
\affil[1]{\small School of Mathematics, University of Bristol, Fry Building,
Bristol,
BS8 1UG, UK}
\author[2]{Jonathan P. Keating}
\affil[2]{\small Mathematical Institute, University of Oxford, Andrew Wiles Building, Oxford, OX2 6GG, UK}
\author[1]{Francesco Mezzadri}

\date{}


\maketitle

\begin{abstract}
We examine the asymptotics of the moments of characteristic polynomials of $N\times N$ matrices drawn from the Hermitian ensembles of Random Matrix Theory, in the limit as $N\to\infty$. We focus in particular on the Gaussian Unitary Ensemble, but discuss other Hermitian ensembles as well. We employ a novel approach to calculate asymptotic formulae for the moments, enabling us to uncover subtle structure not apparent in previous approaches.  
\end{abstract}
\section{Introduction}
The characteristic polynomials of random matrices have received considerable attention over the past twenty years. One of the principal motivations stems from their connections to  the statistical properties of the Riemann zeta-function and other families of $L$-functions \cite{Keating2000random, Keating2000, Hughes2000, Conrey2001,Conrey2003,Conrey2005integral,Keating2011,Fyodorov2012,Fyodorov2014}.  In this context, the value distributions of the characteristic polynomials of random unitary, orthogonal and symplectic matrices have been calculated using a variety of approaches. For example, the moments have been computed in all three cases and the results used to develop conjectures for the moments of the Riemann zeta-function $\zeta(s)$ on its critical line and for the moments of families of $L$-functions at the centre of the critical strip.  Specifically, if $A$ is an $N\times N$ unitary matrix, drawn at random uniformly with respect to Haar measure on the unitary group $U(N)$, then for ${\rm Re}\beta>-1/2$
\begin{equation}
\mathbb{E}_{A\in U(N)}\big[|\det(I-Ae^{-i\theta})|^{2\beta}\big] =\prod_{j=1}^N \frac{\Gamma(j)\Gamma(j+2\beta)}
{\Gamma(j+\beta)^2}
\end{equation}
from which one can deduce that as $N\to\infty$
\begin{equation}
\mathbb{E}_{A\in U(N)}\big[|\det(I-Ae^{-i\theta})|^{2\beta}\big]\sim\frac{G(1+\beta)^2}{G(1+2\beta)}N^{\beta^2},
\end{equation}
where $G(s)$ is the Barnes $G$-function, and for  $k\in\mathbb{N}$
\begin{equation}
\label{KSCUE}
\mathbb{E}_{A\in U(N)}\big[|\det(I-Ae^{-i\theta})|^{2k}\big]\sim\left(\prod_{m=0}^{k-1}\frac{m!}{(m+k)!}\right)N^{k^2}.
\end{equation}
These formulae lead to the conjectures \cite{Keating2000random} that for ${\rm Re}\beta>-1/2$, as $T\rightarrow\infty$
\begin{equation}
\frac{1}{T}\int_0^T|\zeta\left(1/2+i t\right)|^{2\beta}{\rm d}t\sim a(\beta)\frac{G(1+\beta)^2}{G(1+2\beta)}\left(\log T\right)^{\beta^2}
\end{equation}
and for $k\in\mathbb{N}$, as $T\rightarrow\infty$
\begin{equation}
\frac{1}{T}\int_0^T|\zeta\left(1/2+i t\right)|^{2k}{\rm d}t\sim a(k)\prod_{m=0}^{k-1}\frac{m!}{(m+k)!}\left(\log T\right)^{k^2}
\end{equation}
where 
\begin{equation}
a(s)=\prod_p\left[\left(1-\frac{1}{p^s}\right)^{s^2}\sum_{m=0}^\infty\left(\frac{\Gamma(m+s)}{m!\Gamma(s)}\right)^2p^{-m}\right]
\end{equation}
with the product running over primes $p$.

Our focus here will primarily be on the Gaussian Unitary Ensemble (GUE) of random complex Hermitian matrices. For an $N\times N$ matrix $M$ drawn from the GUE, the joint eigenvalue density function is
\beq
\frac{1}{\mathscr{Z}_N^{(H)}}\prod_{1\leq j<k \leq N}|x_j-x_k|^2\prod_{j=1}^Ne^{-\frac{Nx_j^2}{2}},
\eeq
where ${\mathscr{Z}_N^{(H)}}$ is a normalization constant.
Brezin and Hikami \cite{Brezin2000}  calculated the $N\to\infty$ asymptotics of the moments of the associated characteristic polynomials to be
\beq\label{eq:BH0}
\mathbb{E}^{(H)}_N\left[\det(t-M)^{2p}\right] = e^{-Np}e^{Np\frac{t^2}{2}}(2\pi N\rho_{sc}(t))^{p^2}\prod_{j=0}^{p-1}\frac{j!}{(p+j)!},\quad p\in\mathbb{N},
\eeq
where the asymptotic eigenvalue density is given by the Wigner semi-circle law
\beq
\rho_{sc}(x) = \frac{1}{2\pi}\sqrt{4-x^2}.
\eeq
This corresponds precisely to \eqref{KSCUE}, where the mean density is constant.  

Our purpose is to focus on some subtle features of the asymptotics of the moments of the characteristic polynomials of GUE matrices, and of matrices drawn from other unitarily invariant ensembles, that are not captured by \eqref{eq:BH0}.  In particular, we will show that the asymptotics when $N$ is even differs from when it is odd, and that one only recovers \eqref{eq:BH0} when one formally averages the two cases.  At the one hand, this is somewhat surprising, because \eqref{eq:BH0} is a central pillar of the theory of the characteristic polynomials of random matrices, but on the other it is not completely expected, based on the following reasoning.  A straightforward application of the method of orthogonal polynomials gives that the correlations of characteristic polynomials have a determinantal structure involving classical orthogonal polynomials. For the GUE, for example,
\beq
\mathbb{E}^{(H)}_N\big[\prod_{j=1}^{2p}\det(t_j-M)\big] = \frac{1}{\Delta(\textbf{t})}\det[N^{-\frac{N-j}{2}-p}H_{N+2p-j}(\sqrt{N}t_k)]_{1\leq j,k \leq 2p},
\eeq
where $H_n(x)$ is a Hermite polynomial of degree $n$ and 
\beq
\Delta(\textbf{t})=\Delta(t_1,\dots ,t_{2p}) = \prod_{1\leq j<k\leq 2p}(t_j-t_k)
\eeq
is the Vandermonde determinant. As a consequence, the moments also take a determinantal form comprising derivatives of Hermite polynomials. But Hermite polynomials depend on the parity of the degree  $n$ via
\beq
H_n(-x) = (-1)^n H_n(x).
\eeq
Therefore, the moments of characteristic polynomials also depend on the parity of the degree, which in turn depends on the parity of the dimension of the matrix $N$. As a result, one might expect that the asymptotic behaviour of the moments of characteristic polynomials should be different for even and odd dimensional matrices. The question is: at what order in the asymptotics is this important?  We show below that for the GUE it influences the leading-order behaviour. This is in contrast to other Hermitian ensembles such as the Laguerre unitary ensemble (LUE) and the Jacobi unitary ensemble (JUE), for which both even and odd dimensional matrices have the same moments at leading order.

Brezin and Hikami \cite{Brezin2000} used orthogonal polynomial techniques to arrive at \eqref{eq:BH0}. Other studies to-date relating to the asymptotics of the moments of characteristic polynomials have relied mainly on the orthogonal polynomial method and saddle point techniques \cite{Brezin2000,Brezin2000edge,Baik2003}, the Riemann-Hilbert method \cite{Strahov2003}, Hankel determinants with Fisher-Hartwig symbols \cite{Krasovsky2007,Forrester2004,Garoni2005}, and  supersymmetric tools \cite{Andreev1995,Fyodorov2002,Fyodorov2002negative,Szabo2001}. In the present paper we take a different line of attack: we express the moments in terms of certain multivariate orthogonal polynomials and take a combinatorial approach to compute the asymptotics of the moments using the properties of these polynomials. By doing so, we discover that even and odd dimensional GUE matrices give different contributions in the large $N$ limit, and that only a formal average gives formulae consistent with \eqref{eq:BH0}. In Sec.~\ref{sec:second moment}, this phenomenon is discussed in detail for the second moment of the characteristic polynomial.

In addition to connections with number theory, characteristic polynomials have found numerous applications in quantum chaos \cite{Andreev1995}, mesoscopic systems \cite{Fyodorov1995}, quantum chromodynamics \cite{Damgaard1998}, and in a variety of combinatorial problems \cite{Strahov2003moments,Diaconis2004}. The asymptotic study of negative moments and ratios of characteristic polynomials is another active area of research, see for example \cite{Berry2001,Fyodorov2003negative,Fyodorov2003,Baik2003,Strahov2003,Forrester2004singularity,Borodin2006,Breuer2012,Fyodorov2018,Akemann2020}. More recently, the statistics of the maximum of the characteristic polynomial are being extensively studied, motivated by the relations to logarithmically correlated Gaussian processes. For example, see \cite{Fyodorov2012,Fyodorov2014,Fyodorov2016moments,Fyodorov2016} and references therein.  We expect that the techniques developed here will have applications to those calculations as well. 

This paper is structured as follows. After introducing the required tools in Sec.~\ref{sec:background}, we recall the moments of characteristic polynomials of the GUE, LUE and JUE in Sec.~\ref{sec:moments}. In Sec.~\ref{sec:asymptotics}, we compute the asymptotics of moments of the GUE and illustrate how to recover the semi-circle law in the limit as the matrix size goes to infinity. In the last section Sec.~\ref{sec:secular coeff}, as an application of the results discussed, we compute the correlations of secular coefficients which are the coefficients of a characteristic polynomial when expanded as a function of the spectral variable. 

\section{Background}\label{sec:background}
A partition $\mu$ is a sequence of integers $(\mu_1,\dots,\mu_l)$ such that $\mu_1\geq\dots\geq\mu_l>0$. Here $l$ is the length of the partition and we denote $|\mu|=\mu_1+\dots+\mu_l$ to be the weight of the partition. We do not distinguish partitions that only differ by a sequence of zeros at the end. For example $(4,2)$ and $(4,2,0,0)$ are equivalent with length $l=2$ and weight 6. A partition can be represented with a \textit{Young diagram} which is a left adjusted table of $|\mu|$ boxes and $l(\mu)$ rows such that the first row contains $\mu_1$ boxes, the second row contains $\mu_2$ boxes, and so on. The conjugate partition $\mu^\prime$ is defined by transposing the Young diagram of $\mu$ along the main diagonal. 
\beq
\begin{split}
\ytableausetup{boxsize=1.2em}
\ydiagram[]{4,2} &\quad\qquad\qquad \ydiagram[]{2,2,1,1}\\
\text{Young diagram of $\mu$}&\qquad\text{Young diagram of $\mu^\prime$}
\end{split}
\eeq

For a partition $\mu$, let $\varPhi_\mu$ be the multivariate symmetric polynomial, with leading coefficient equal to 1, that obey the orthogonality relation
\beq\label{eq:mul_poly}
\int \varPhi_\mu(x_1,\dots,x_N)\varPhi_\nu(x_1,\dots,x_N)\prod_{1\leq i<j\leq N}(x_i-x_j)^2\prod_{j=1}^Nw(x_j)\,dx_j = \delta_{\mu\nu}C_\mu
\eeq
for a weight function $w(x)$. Here the lengths of the partitions $\mu$ and $\nu$ are less than or equal to the number of variables $N$, and $C_\mu$ is a constant which depends on $N$. Polynomial $\varPhi_\mu$ can be expressed as a ratio of determinants, as given in \cite{Sergeev2014},
\beq\label{eq:gen_poly}
\begin{split}
\varPhi_\mu(\textbf{x}) = \frac{1}{\Delta(\textbf{x})}
\begin{vmatrix}
\varphi_{\mu_1+N-1}(x_1) & \varphi_{\mu_1+N-1}(x_2) & \dots & \varphi_{\mu_1+N-1}(x_N)\\
\varphi_{\mu_2+N-2}(x_1) & \varphi_{\mu_2+N-2}(x_2) & \dots & \varphi_{\mu_2+N-2}(x_N)\\
\vdots & \vdots & & \vdots\\
\varphi_{\mu_N}(x_1) & \varphi_{\mu_N}(x_2) & \dots & \varphi_{\mu_N}(x_N)
\end{vmatrix},
\end{split}
\eeq
where $\varphi_j(x)$ is a polynomial of degree $j$ orthogonal with respect to $w(x)$. We focus in particular to the case when $w(x)$ in \eqref{eq:mul_poly} is one of the weights
\beq\label{eq:weights}
w(x) = 
\begin{cases}
e^{-\frac{N x^2}{2}}, \qquad\qquad\,\,\,\, x\in \mathbb{R}, \qquad\qquad\qquad\quad\quad\,\,\, \text{Gaussian},\\
x^\gamma e^{-2Nx},  \qquad\quad\,\,\,\, x\in \mathbb{R}_+,\qquad \gamma>-1, \qquad\,\,\text{Laguerre},\\
x^{\gamma_1}(1-x)^{\gamma_2}, \qquad x\in[0,1], \quad \gamma_1,\gamma_2>-1, \quad \text{Jacobi}.
\end{cases}
\eeq
The monic polynomials orthogonal with respect to these weights are
\begin{align}
    h_n(x) &= N^{-\frac{n}{2}}H_n(\sqrt{N}x),\\
    l_n^{(\gamma)}(x)&=\frac{(-1)^nn!}{(2N)^n}L_n^{(\gamma)}(2Nx),\\
    j_n^{(\gamma_1,\gamma_2)}&=(-1)^nn!\frac{\Gamma(n+\gamma_1+\gamma_2+1)}{\Gamma(2n+\gamma_1+\gamma_2+1)}J_n^{(\gamma_1,\gamma_2)}(x),
    \end{align}
    where $H_n(x)$, $L_n^{(\gamma)}(x)$ and $J_n^{(\gamma_1,\gamma_2)}(x)$ are the classical orthogonal polynomials that satisfy
\begin{subequations}
\label{eq:classical_hlj_ortho}
\begin{gather}
\label{eq:classical_hlj_ortho1}
\int_{\mathbb{R}}H_j(x)H_k(x)e^{-\frac{x^2}{2}}\,dx = \sqrt{2\pi}j!\delta_{jk},\\
\label{eq:classical_hlj_ortho2}
 \quad\int_{\mathbb{R}_{+}}L^{(\gamma)}_mL^{(\gamma)}_nx^\gamma e^{-x}\, dx 
= \frac{\Gamma(n+\gamma +1)}{\Gamma(n+1)}\delta_{nm},\\
\int_0^1 J^{(\gamma_1,\gamma_2)}_n(x)J^{(\gamma_1,\gamma_2)}_m(x)x^{\gamma_1}(1-x)^{\gamma_2}\,dx \nonumber\\
\label{eq:classical_hlj_ortho3}
= \frac{1}{(2n+\gamma_1+\gamma_2+1)}\frac{\Gamma(n+\gamma_1+1)\Gamma(n+\gamma_2+1)}{n!\Gamma(n+\gamma_1+\gamma_2+1)}\delta_{mn}.
\end{gather}
\end{subequations}

When $\varphi_n(x)$ in \eqref{eq:gen_poly} is chosen to be one of the  Hermite $h_n(x)$, Laguerre $l_n^{(\gamma)}(x)$ and Jacobi $j_n^{(\gamma_1,\gamma_2)}(x)$ polynomials of degree $n$, we get their multivariable analogues denoted by $\mathscr{H}_\mu$, $\mathscr{L}^{(\gamma)}_\mu$ and $\mathscr{J}^{(\gamma_1,\gamma_2)}_\mu$. These multivariate generalisations are the eigenfunctions of differential equations called Calogero–Sutherland Hamiltonians. Several properties such as recursive relations and integration formulas extend to the multivariate case \cite{Baker1997,Baker1997calogero}.

Define
\beq\label{eq:C_lambda_and_G_lambda}
\begin{split}
C_\lambda(N) &= \prod_{j=1}^N\frac{(\lambda_j+N-j)!}{(N-j)!},\\
G_\lambda(N,\gamma) &= \prod_{j=1}^N\Gamma(\lambda_j+N-j+\gamma +1).
\end{split}
\eeq
The constants $C_\lambda(N)$ and $G_\lambda(N,\gamma)$ have several interesting combinatorial interpretations \cite{Keating2010}. The joint probability densities function for the GUE, LUE and JUE are
\begin{align}
    p^{(H)}(x_1,\dots,x_N)&=\frac{1}{\mathscr{Z}_N^{(H)}}\Delta^2(x_1,\dots,x_N)\prod_{j=1}^Ne^{-\frac{Nx_j^2}{2}},\\
    p^{(L)}(x_1,\dots,x_N)&=\frac{1}{\mathscr{Z}_N^{(L)}}\Delta^2(x_1,\dots,x_N)\prod_{j=1}^Nx_j^\gamma e^{-2Nx_j},\\
    p^{(J)}(x_1,\dots,x_N)&=\frac{1}{\mathscr{Z}_N^{(J)}}\Delta^2(x_1,\dots,x_N)\prod_{j=1}^Nx_j^{\gamma_1}(1-x_j)^{\gamma_2},
\end{align}
with
\begin{align}
   \mathscr{Z}_N^{(H)}&= \frac{(2\pi)^{\frac{N}{2}}}{N^{\frac{N^2}{2}}}\prod_{j=1}^Nj!, \\
   \mathscr{Z}_N^{(L)}&= \frac{N!}{(2N)^{N(N+\gamma)}}G_0(N,\gamma)G_0(N,0),\\
  \mathscr{Z}_N^{(J)}&= N!\prod_{j=0}^{N-1}\frac{j!\,\Gamma(j+\gamma_1+1)\Gamma(j+\gamma_2+1)\Gamma(j+\gamma_1+\gamma_2+1)}{\Gamma(2j+\gamma_1+\gamma_2+2)\Gamma(2j+\gamma_1+\gamma_2+1)}.
\end{align}

Similar  to \eqref{eq:classical_hlj_ortho}, the polynomials $\mathscr{H}_\mu$, $\mathscr{L}_\mu^{(\gamma)}$ and $\mathscr{J}_\mu^{(\gamma_1,\gamma_2)}$ satisfy
\begin{align}
&\frac{1}{\mathscr{Z}_n^{(H)}}\int_{(-\infty,\infty)^n}\mathscr{H}_\mu(\textbf{x})\mathscr{H}_\nu(\textbf{x})\Delta^2(\textbf{x})\prod_{j=1}^n e^{-\frac{Nx_j^2}{2}}\, dx_j = \frac{1}{N^{|\mu|}}C_\mu(n)\delta_{\mu\nu},\\
&\frac{1}{\mathscr{Z}_n^{(L)}}\int_{[0,\infty)^n}\mathscr{L}_\mu(\textbf{x})\mathscr{L}_\nu(\textbf{x})\Delta^2{(\textbf{x})}\prod_{j=1}^nx_j^\gamma e^{-2Nx_j}\,dx_j = \frac{1}{(2N)^{2|\mu|}}\frac{G_\mu(n,\gamma)}{G_0(n,\gamma)}C_\mu(n)\delta_{\mu\nu},\label{eq:ortho mullag}\\
&\frac{1}{\mathscr{Z}^{(J)}_n}\int_{[0,1]^n}\mathscr{J}^{(\gamma_1,\gamma_2)}_\mu(\textbf{x}) \mathscr{J}^{(\gamma_1,\gamma_2)}_\nu(\textbf{x})\Delta^2(\textbf{x})\prod_{j=1}^nx_j^{\gamma_1}(1-x_j)^{\gamma_2}\,dx_j\nonumber \\
=&\prod_{j=1}^N\frac{\Gamma(2n-2j+\gamma_1+\gamma_2+1)\Gamma(2n-2j+\gamma_1+\gamma_2+2)}{\Gamma(2\lambda_j+2n-2j+\gamma_1+\gamma_2+1)\Gamma(2\lambda_j+2n-2j+\gamma_1+\gamma_2+2)}\nonumber\\
&\quad\times \frac{G_\mu(n,\gamma_1+\gamma_2)G_\mu(n,\gamma_1)G_\mu(n,\gamma_2)}{G_0(n,\gamma_1+\gamma_2)G_0(n,\gamma_1)G_0(n,\gamma_2)}C_\mu(n)\delta_{\lambda\mu},
\end{align}

The Schur polynomials $S_\lambda$, indexed by a partition $\lambda$, are defined as 
\beq
S_\lambda(x_1,\dots,x_n) = \frac{\det[x_k^{\lambda_j+n-j}]_{1\leq j,k\leq n}}{\det[x_k^{n-j}]_{1\leq j,k\leq n}},
\eeq
for $l(\lambda)\leq n$, and $S_\lambda=0$ for $l(\lambda)>n$. The polynomials $\mathscr{H}_\mu$, $\mathscr{L}^{(\gamma)}_\mu$ and $\mathscr{J}^{(\gamma_1,\gamma_2)}_\mu$ form a basis for symmetric polynomials of degree $|\mu|$.
The Schur polynomials can be expanded as \cite{Jonnadula2020}
\beq\label{eq:schur to mulher}
\begin{split}
S_\lambda(x_1,\dots,x_n) = \sum_{\nu\subseteq\lambda}\Psi_{\lambda\nu}\varPhi_\nu(x_1,\dots,x_n),
\end{split}
\eeq
where $\varPhi_\nu(\bf x)$ can be either $\mathscr{H}_\nu$, $\mathscr{L}^{(\gamma)}_\nu$, or $\mathscr{J}^{(\gamma_1,\gamma_2)}_\nu$. In the following the superscripts $(H)$, $(L)$ and $(J)$ indicate Hermite, Laguerre and Jacobi, respectively. The coefficients in \eqref{eq:schur to mulher} are
\beq
\Psi_{\lambda\nu}^{(H)}=\left(\frac{1}{2N}\right)^{\frac{|\lambda|-|\nu|}{2}}\frac{C_\lambda(n)}{C_\nu(n)}D_{\lambda\nu}^{(H)},
\eeq
where 
\beq\label{eq:D her}
D_{\lambda\nu}^{(H)} = \det\left[\mathbbm{1}_{\lambda_j-\nu_k-j+k= \text{0 mod 2}}\,\frac{1}{\left(\frac{\lambda_j-\nu_k-j+k}{2}\right)!}\right]_{j,k=1,\dots , l(\lambda)}.
\eeq
Similarly, one has
\begin{align}
\Psi_{\lambda\nu}^{(L)} &= \frac{1}{(2N)^{|\lambda|-|\nu|}}\frac{G_\lambda(n,\gamma)G_\lambda(n,0)}{G_\nu(n,\gamma)G_\nu(n,0)}D_{\lambda\mu}^{(L)},\\
\Psi_{\lambda\nu}^{(J)} &= \frac{G_\lambda(n,\gamma_1)G_\lambda(n,0)}{G_\nu(n,\gamma_1)G_\nu(n,0)}\left(\prod_{j=1}^n\Gamma(2\nu_j+2n-2j+\gamma_1+\gamma_2+2)\right)\mathcal{D}^{(J)}_{\lambda\nu},\label{eq: schur to mul jac coef}
\end{align}
where 
\begin{align}
&D^{(L)}_{\lambda\nu}= \text{det}\left[\mathbbm{1}_{\lambda_i-\nu_j-i+j\geq 0\frac{1}{(\lambda_i-\nu_j-i+j)!}}\right]_{1\leq i,j\leq l(\lambda)}\label{eq:D lag},\\
&\mathcal{D}^{(J)}_{\lambda\nu}=\text{det}\left[\mathbbm{1}_{\lambda_j-\nu_k-j+k\geq 0}\frac{1}{(\lambda_j-\nu_k-j+k)!\,\Gamma(2n+\lambda_j+\nu_k-j-k+\gamma_1+\gamma_2+2)}\right]_{1\leq i,j\leq n}\label{eq:D jac}.
\end{align}

The polynomials $\mathscr{H}_\mu$, $\mathscr{L}^{(\gamma)}_\mu$ and $\mathscr{J}^{(\gamma_1,\gamma_2)}_\mu$ are chosen such that the leading coefficient of these polynomials in the Schur basis is 1. More precisely,
\beq
\varPhi_\lambda(x_1,\dots,x_n) = \sum_{\mu\subseteq\lambda}\Upsilon_{\lambda\mu}S_\mu(x_1,\dots,x_n),
\eeq
where $\varPhi_\lambda$ is one of the $\mathscr{H}_\lambda$, $\mathscr{L}^{(\gamma)}_\lambda$, $\mathscr{J}^{(\gamma_1,\gamma_2)}_\lambda$, and 
\begin{align}
\Upsilon_{\lambda\mu}^{(H)}&=\left(\frac{-1}{2N}\right)^{\frac{|\lambda|-|\mu|}{2}}\frac{C_\lambda(n)}{C_\mu(n)}D_{\lambda\mu}^{(H)},\\
\Upsilon_{\lambda\mu}^{(L)}&=\left(\frac{-1}{2N}\right)^{|\lambda|-|\mu|}\frac{G_\lambda(n,\gamma)G_\lambda(n,0)}{G_\mu(n,\gamma)G_\mu(n,0)}D_{\lambda\mu}^{(L)},\\
\Upsilon_{\lambda\mu}^{(J)}&=(-1)^{|\lambda|+|\mu|}\left(\prod_{j=1}^n\frac{1}{\Gamma(2\lambda_j+2n-2j+\gamma_1+\gamma_2+1)}\right)\frac{G_\lambda(n,\gamma_1)G_\lambda(n,0)}{G_\mu(n,\gamma_1)G_\mu(n,0)}\tilde{\mathcal{D}}^{(J)}_{\lambda\mu},
\end{align}
with
\beq
\tilde{\mathcal{D}}^{(J)}_{\lambda\nu}=\text{det}\left[\mathbbm{1}_{\lambda_j-\nu_k-j+k\geq 0}\frac{\Gamma(2n+\lambda_j+\nu_k-j-k+\gamma_1+\gamma_2+1)}{(\lambda_j-\nu_k-j+k)!}\right]_{1\leq i,j\leq n}
\eeq
In this paper, the above results play an important role in studying the correlations of characteristic polynomials and  secular coefficients.

\section{Moments of characteristic polynomials}\label{sec:moments}
The Schur polynomials satisfy the following identity which has proven to be crucial in computing the correlations of characteristic polynomials of the unitary group \cite{Bump2006}.
\begin{lemma}[Dual Cauchy identity]
Let $p,N\in\mathbb{N}$. For $\lambda\subseteq (N^p)\equiv (\underbrace{N,\dots,N}_{p})$, let $\tilde{\lambda}=(p-\lambda_N^\prime,\dots,p-\lambda_1^\prime)$. Then \cite{Macdonald1998}
\beq
\prod_{i=1}^p\prod_{j=1}^N(t_i-x_j) = \sum_{\lambda\subseteq (N^p)} (-1)^{|\tilde{\lambda}|}S_\lambda(t_1,\dots,t_p)S_{\tilde{\lambda}}(x_1,\dots ,x_N).
\eeq
\end{lemma}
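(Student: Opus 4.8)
The plan is to obtain this from the ordinary dual Cauchy identity by a reciprocal substitution. Recall that for variables $t_1,\dots,t_p$ and $z_1,\dots,z_N$ one has
\[
\prod_{i=1}^p\prod_{j=1}^N(1+t_iz_j)=\sum_{\lambda}S_\lambda(t_1,\dots,t_p)\,S_{\lambda'}(z_1,\dots,z_N),
\]
where the sum is over all partitions but only those with $\lambda\subseteq(N^p)$ survive, since $S_\lambda(t_1,\dots,t_p)=0$ when $\ell(\lambda)>p$ and $S_{\lambda'}(z_1,\dots,z_N)=0$ when $\ell(\lambda')=\lambda_1>N$. First I would write $t_i-x_j=-x_j\bigl(1+t_i(-x_j^{-1})\bigr)$ and specialise $z_j=-x_j^{-1}$; this is valid on the Zariski-dense set where all $x_j\neq0$, and since both sides of the target identity are polynomials, the resulting equality then holds identically. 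This yields
\[
\prod_{i=1}^p\prod_{j=1}^N(t_i-x_j)=(-1)^{Np}\Bigl(\prod_{j=1}^Nx_j\Bigr)^{p}\sum_{\lambda\subseteq(N^p)}S_\lambda(t)\,S_{\lambda'}(-x_1^{-1},\dots,-x_N^{-1}).
\]

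Next I would simplify each summand using two elementary facts from the bialternant formula. By homogeneity $S_{\lambda'}(-\mathbf w)=(-1)^{|\lambda|}S_{\lambda'}(\mathbf w)$, and by the reciprocal-Schur identity $S_\mu(x_1^{-1},\dots,x_N^{-1})=(x_1\cdots x_N)^{-\mu_1}S_\nu(x_1,\dots,x_N)$ with $\nu_i=\mu_1-\mu_{N+1-i}$ (proved by factoring $x_j^{-\mu_1}$ out of each column of the numerator determinant and reversing rows). Applying this with $\mu=\lambda'$ — note $\mu_1=\lambda_1'=\ell(\lambda)\le p$ — each term becomes $(-1)^{|\lambda|}(x_1\cdots x_N)^{p-\lambda_1'}S_\nu(x)$ after collecting the $(x_1\cdots x_N)^p$ prefactor. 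Since adding a constant $c\ge 0$ to every part of an $N$-variable partition multiplies its Schur function by $(x_1\cdots x_N)^c$, and $\tilde\lambda_i=p-\lambda'_{N+1-i}=(p-\lambda_1')+\nu_i$, the factor $(x_1\cdots x_N)^{p-\lambda_1'}S_\nu(x)$ collapses exactly to $S_{\tilde\lambda}(x)$. Finally $|\tilde\lambda|=\sum_i(p-\lambda'_{N+1-i})=Np-|\lambda'|=Np-|\lambda|$, so $(-1)^{Np}(-1)^{|\lambda|}=(-1)^{Np+|\lambda|}=(-1)^{|\tilde\lambda|}$, and the claimed formula drops out.

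An alternative, fully self-contained route is to evaluate the Vandermonde determinant $\prod_{1\le a<b\le p+N}(y_a-y_b)$ in the variables $(y_1,\dots,y_{p+N})=(t_1,\dots,t_p,x_1,\dots,x_N)$ in two ways: it equals $\Delta(\mathbf t)\Delta(\mathbf x)\prod_{i,j}(t_i-x_j)$ by factoring the product, and it equals $\det[y_b^{\,p+N-a}]$, which by Laplace expansion along the first $p$ columns is a sum over $p$-element subsets of the exponent set $\{0,\dots,p+N-1\}$; each such subset encodes a partition $\lambda\subseteq(N^p)$ whose complementary exponents give precisely $\tilde\lambda$, and the two complementary minors are $\Delta(\mathbf t)S_\lambda(\mathbf t)$ and $\Delta(\mathbf x)S_{\tilde\lambda}(\mathbf x)$ up to the Laplace cofactor sign. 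In either approach I expect the only genuine obstacle to be the sign and complementation bookkeeping — in the determinant proof this is the Laplace cofactor sign, and in the substitution proof it is the interplay of $(-1)^{Np}$, $(-1)^{|\lambda|}$ and the row reversals inside the reciprocal-Schur identity — together with verifying that the complement produced matches the paper's convention $\tilde\lambda=(p-\lambda'_N,\dots,p-\lambda'_1)$; everything else is routine.
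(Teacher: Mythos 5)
The paper does not actually prove this lemma; it simply cites Macdonald's book and moves on, so there is no internal proof to compare against. Your first argument is a complete and correct derivation: the substitution $z_j=-x_j^{-1}$ into $\prod_{i,j}(1+t_iz_j)=\sum_{\lambda\subseteq(N^p)}S_\lambda(t)S_{\lambda'}(z)$, the homogeneity sign $(-1)^{|\lambda|}$, the reciprocal-Schur identity $S_\mu(x^{-1})=(x_1\cdots x_N)^{-\mu_1}S_\nu(x)$ with $\nu_i=\mu_1-\mu_{N+1-i}$, and the translation $\tilde\lambda_i=(p-\lambda_1')+\nu_i$ all check out, and the sign count $(-1)^{Np}(-1)^{|\lambda|}=(-1)^{|\tilde\lambda|}$ is right since $|\tilde\lambda|=Np-|\lambda|$; the passage from the Zariski-dense set $x_j\neq 0$ to a polynomial identity is also handled correctly. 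Your alternative Vandermonde/Laplace route is the other classical proof and is strategically sound, but as written it is only a sketch --- the cofactor sign and the identification of the complementary exponent set with $\tilde\lambda$ are precisely where the work lies, and you defer them. Either route is a legitimate replacement for the paper's bare citation; if you want a self-contained proof, I would present the substitution argument in full and mention the determinantal one only as a remark.
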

\noindent Here $\lambda=(\lambda_1,\dots,\lambda_p)$ is a sub-partition of $(N^p)$ indicated by $\lambda\subseteq (N^p)$ (each $\lambda_j\leq N$ for $j=1,\dots, p$) and $\lambda^\prime$ is the conjugate partition of $\lambda$. The $\varPhi_\mu$'s satisfy a  generalised dual Cauchy identity, which is similar to that for the Schur polynomials.
\begin{lemma}\label{lemma:dualcauchy_to_mulmonicpoly}
With the notation introduced above, we have \cite{Jonnadula2020}.
\beq\label{eq:new_identity}
\prod_{i=1}^p\prod_{j=1}^N(t_i-x_j) = \sum_{\lambda\subseteq (N^p)} (-1)^{|\tilde{\lambda}|}\varPhi_\lambda(t_1,\dots,t_p)\varPhi_{\tilde{\lambda}}(x_1,\dots ,x_N).
\eeq
\end{lemma}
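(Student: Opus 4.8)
The plan is to avoid orthogonality altogether and obtain \eqref{eq:new_identity} from the classical dual Cauchy identity for Schur functions (stated above) by the very same Laplace‑expansion argument that proves the Schur case, noting that this argument only ever uses that the column functions are monic of the prescribed degrees.

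First I would group the $p+N$ variables into $\mathbf z=(t_1,\dots,t_p,x_1,\dots,x_N)$ and consider the $(p+N)\times(p+N)$ determinant $D:=\det\big[\varphi_{b-1}(z_a)\big]_{1\le a,b\le p+N}$, with $\varphi_0,\dots,\varphi_{p+N-1}$ the monic polynomials appearing in \eqref{eq:gen_poly}. Since $\varphi_{b-1}$ is monic of degree $b-1$, column operations turn $D$ into the plain Vandermonde $\det[z_a^{b-1}]$, so
\[
D=\Delta(\mathbf t)\,\Delta(\mathbf x)\prod_{i=1}^p\prod_{j=1}^N(t_i-x_j).
\]
Then I would Laplace‑expand $D$ along the first $p$ rows (those carrying $t_1,\dots,t_p$):
\[
D=\sum_{S}\epsilon(S)\,
\det\big[\varphi_{b-1}(t_a)\big]_{a\le p,\,b\in S}\;
\det\big[\varphi_{c-1}(x_a)\big]_{a\le N,\,c\in S^c},
\]
summed over $p$‑element subsets $S\subseteq\{1,\dots,p+N\}$, where $\epsilon(S)$ is the (explicit, $\varphi$‑independent) sign of the associated shuffle permutation. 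For $S=\{b_1<\dots<b_p\}$, setting $\lambda_i=b_{p+1-i}-1-(p-i)$ gives a partition $\lambda\subseteq(N^p)$, and \eqref{eq:gen_poly} (with $N$ replaced by $p$) identifies the first minor with $\Delta(\mathbf t)\,\varPhi_\lambda(\mathbf t)$; the complementary index set $S^c$ encodes exactly the partition $\tilde\lambda$, so the second minor is $\Delta(\mathbf x)\,\varPhi_{\tilde\lambda}(\mathbf x)$. Dividing by $\Delta(\mathbf t)\Delta(\mathbf x)$ then yields \eqref{eq:new_identity}, provided the sign is $\epsilon(S)=(-1)^{|\tilde\lambda|}$.

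Pinning down that sign, and checking that $S^c$ corresponds to $\tilde\lambda$ under $\lambda\mapsto(p-\lambda'_N,\dots,p-\lambda'_1)$, is the one genuinely fiddly point — but it is purely combinatorial and, crucially, identical to the Schur case: running the computation above with $\varphi_k(z)=z^k$ turns $D$ into $\Delta(\mathbf t)\Delta(\mathbf x)\prod_{i,j}(t_i-x_j)$ and the two minors into $\Delta(\mathbf t)\,S_\lambda(\mathbf t)$ and $\Delta(\mathbf x)\,S_{\tilde\lambda}(\mathbf x)$, so matching against the Schur dual Cauchy identity above reads off $\epsilon(S)=(-1)^{|\tilde\lambda|}$ and the $S\leftrightarrow(\lambda,\tilde\lambda)$ correspondence at once. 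Hence the passage from Schur functions to the $\varPhi_\mu$'s changes only $S_\lambda\mapsto\varPhi_\lambda$ and $S_{\tilde\lambda}\mapsto\varPhi_{\tilde\lambda}$ term by term in the Laplace expansion, which is exactly the assertion of the Lemma.

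As an alternative one could instead expand the left‑hand side of \eqref{eq:new_identity} directly in the basis $\{\varPhi_\lambda(\mathbf t):\lambda\subseteq(N^p)\}$, insert \eqref{eq:schur to mulher} and its inverse into the Schur dual Cauchy identity, and match coefficients; this reduces \eqref{eq:new_identity} to the single combinatorial identity $\Psi_{\mu\nu}=(-1)^{|\mu|-|\nu|}\,\Upsilon_{\tilde\nu\tilde\mu}$ relating the transition matrices in $p$ and in $N$ variables, which in turn follows from \eqref{eq:D her}, \eqref{eq:D lag}, \eqref{eq:D jac} by a Jacobi--Trudi‑type duality under conjugation of partitions. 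The Laplace route above is shorter and makes the weight‑independence of the identity manifest, so I would present that one.
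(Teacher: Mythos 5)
Your Laplace-expansion argument is correct: reducing the $(p+N)\times(p+N)$ determinant $\det[\varphi_{b-1}(z_a)]$ to the Vandermonde uses only that the $\varphi_j$ are monic of degree $j$, the minors are identified with $\Delta(\mathbf t)\varPhi_\lambda(\mathbf t)$ and $\Delta(\mathbf x)\varPhi_{\tilde\lambda}(\mathbf x)$ via \eqref{eq:gen_poly}, and fixing the shuffle sign and the $S\leftrightarrow(\lambda,\tilde\lambda)$ correspondence by specialising to $\varphi_k(z)=z^k$ and comparing with the Schur dual Cauchy identity is legitimate, since the products $S_\lambda(\mathbf t)S_{\tilde\lambda}(\mathbf x)$ are linearly independent. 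The paper itself gives no proof of this lemma, only the citation to \cite{Jonnadula2020}, and your route is the natural generalisation of the standard proof of the Schur dual Cauchy identity, so there is nothing in the text to contrast it with; your observation that the identity is weight-independent (orthogonality of the $\varphi_j$ is never used) is a genuine, worthwhile remark.
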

\noindent The identity in \eqref{eq:new_identity} gives a compact way to calculate the correlation functions and moments of characteristic polynomials of unitary invariant Hermitian ensembles.

\begin{proposition}\label{prop:correlations_charpoly}
Let $M$ be an $N\times N$ GUE, LUE or JUE matrix and $t_1,\dots, t_p\in\mathbb{C}$. Then, using the generalised dual Cauchy identity \cite{Jonnadula2020},
\begin{align}
\textit{(a)}\quad\mathbb{E}^{(H)}_N\big[\prod_{j=1}^p\det(t_j - M)\big] &=\mathscr{H}_{(N^p)}(t_1,\dots, t_p)\label{eq:char poly corre her rescaled}\\
\textit{(b)}\quad\mathbb{E}^{(L)}_N\big[\prod_{j=1}^p\det(t_j-M)\big] &= \mathscr{L}^{(\gamma)}_{(N^p)}(t_1,\dots,t_p)\\
\textit{(c)}\quad\mathbb{E}^{(J)}_N\big[\prod_{j=1}^p\det(t_j-M)\big] &= \mathscr{J}^{(\gamma_1,\gamma_2)}_{(N^p)}(t_1,\dots,t_p)
\end{align}
\end{proposition}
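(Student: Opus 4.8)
The plan is to obtain all three identities directly from the generalised dual Cauchy identity (Lemma~\ref{lemma:dualcauchy_to_mulmonicpoly}) together with the orthogonality relations for the multivariate polynomials recalled in Sec.~\ref{sec:background}. I would carry out case (a) in full; cases (b) and (c) follow verbatim after replacing the weight and the corresponding orthogonality relation.

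First I would write the expectation as an eigenvalue integral against the GUE joint density, using $\det(t_j-M)=\prod_{k=1}^{N}(t_j-x_k)$, where $x_1,\dots,x_N$ are the eigenvalues of $M$:
\[
\mathbb{E}^{(H)}_N\Big[\prod_{j=1}^{p}\det(t_j-M)\Big]
=\frac{1}{\mathscr{Z}_N^{(H)}}\int_{\mathbb{R}^N}\prod_{i=1}^{p}\prod_{j=1}^{N}(t_i-x_j)\,\Delta^2(\mathbf{x})\prod_{j=1}^{N}e^{-Nx_j^2/2}\,dx_j .
\]
Then I would expand $\prod_{i,j}(t_i-x_j)$ by Lemma~\ref{lemma:dualcauchy_to_mulmonicpoly} with $\varPhi=\mathscr{H}$, and interchange the (finite) sum over $\lambda\subseteq(N^p)$ with the integral, noting that $\mathscr{H}_\lambda(t_1,\dots,t_p)$ is independent of $\mathbf{x}$. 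This leaves
\[
\sum_{\lambda\subseteq(N^p)}(-1)^{|\tilde\lambda|}\,\mathscr{H}_\lambda(t_1,\dots,t_p)\,\frac{1}{\mathscr{Z}_N^{(H)}}\int_{\mathbb{R}^N}\mathscr{H}_{\tilde\lambda}(\mathbf{x})\,\Delta^2(\mathbf{x})\prod_{j=1}^{N}e^{-Nx_j^2/2}\,dx_j .
\]

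The key observation is that the empty-partition polynomial is the constant $\varPhi_\emptyset\equiv 1$: it has degree $0$ and unit leading coefficient, and equivalently the determinantal formula \eqref{eq:gen_poly} at $\mu=\emptyset$ gives $\det[\varphi_{N-j}(x_k)]/\Delta(\mathbf{x})=1$ since the $\varphi_j$ are monic. Hence each inner integral equals $\frac{1}{\mathscr{Z}_N^{(H)}}\int \mathscr{H}_{\tilde\lambda}\,\mathscr{H}_\emptyset\,\Delta^2\prod_j e^{-Nx_j^2/2}\,dx_j = N^{-|\tilde\lambda|}C_{\tilde\lambda}(N)\,\delta_{\tilde\lambda,\emptyset}$ by the multivariate Hermite orthogonality relation, and $C_\emptyset(N)=1$. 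So only $\tilde\lambda=\emptyset$ contributes; from $\tilde\lambda=(p-\lambda_N',\dots,p-\lambda_1')$ this forces $\lambda_j'=p$ for all $j$, i.e.\ $\lambda=(N^p)$, with $(-1)^{|\tilde\lambda|}=1$. The sum collapses to $\mathscr{H}_{(N^p)}(t_1,\dots,t_p)$, giving (a). For (b) one repeats the argument with the Laguerre weight, $\mathscr{L}^{(\gamma)}$, and \eqref{eq:ortho mullag}, and for (c) with the Jacobi weight, $\mathscr{J}^{(\gamma_1,\gamma_2)}$, and the multivariate Jacobi orthogonality relation; in each case one checks that the diagonal constant at $\mu=\nu=\emptyset$ is $1$ (e.g.\ $G_\emptyset(N,\gamma)=G_0(N,\gamma)$).

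There is no deep obstacle: the content is bookkeeping. The points to handle carefully are confirming $\varPhi_\emptyset=1$; checking that the ensemble normalisation $\mathscr{Z}_N$ cancels exactly the $\mu=\nu=\emptyset$ value of the orthogonality constant; verifying that $l(\lambda)\le p$ and $l(\tilde\lambda)\le N$ for every $\lambda\subseteq(N^p)$, so that each polynomial in Lemma~\ref{lemma:dualcauchy_to_mulmonicpoly} is defined in the correct number of variables; and that the involution $\lambda\mapsto\tilde\lambda$ sends $(N^p)$ to the empty partition with sign $+1$.
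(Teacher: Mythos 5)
Your argument is correct and is exactly the route the paper intends: it invokes the generalised dual Cauchy identity of Lemma~\ref{lemma:dualcauchy_to_mulmonicpoly} and then kills all terms except $\tilde\lambda=\emptyset$ by orthogonality against $\varPhi_\emptyset\equiv 1$, which forces $\lambda=(N^p)$. The paper simply cites this derivation from \cite{Jonnadula2020} without spelling out the bookkeeping, and your version supplies precisely those details correctly.
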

The moments can be readily computed from the above formulae by taking the limit $t_j\rightarrow t$ for $j=1,\dots ,p$. This leads to a determinantal formula for the moments involving the derivatives of orthogonal polynomials:
\beq\label{eq:moments derivatives}
\frac{(-1)^{\frac{1}{2}p(p-1)}}{\prod_{j=1}^{p-1}(j-1)!}
\begin{vmatrix}
\varphi_N(t) &\varphi_{N+1}(t) &\dots &\varphi_{N+p-1}(t)\\
\varphi^\prime_N(t) &\varphi^\prime_{N+1}(t) &\dots &\varphi^\prime_{N+p-1}(t)\\
\vdots &\vdots & &\vdots\\
\varphi^{(p-1)}_N(t) &\varphi^{(p-1)}_{N+1}(t) &\dots &\varphi^{(p-1)}_{N+p-1}(t)
\end{vmatrix}.
\eeq
Here $\varphi_n(t)$ are Hermite $h_n(t)$, Laguerre $l_n^{(\gamma)}(t)$ and Jacobi $j_n^{(\gamma_1,\gamma_2)}(t)$ polynomials for the GUE, LUE and JUE, respectively. By expressing the multivariate polynomials in the Schur basis and using 
\beq\label{eq:c lam n}
\begin{split}
C_\lambda(n) &=  \prod_{j=1}^{l(\lambda)}\frac{(\lambda_j+n-j)!}{(n-j)!} = |\lambda|!\frac{S_\lambda(1^n)}{\text{dim $V_\lambda$}},
\end{split}
\eeq
where the dimension of the irreducible representation of the symmetric group is
\beq
\text{dim}\,V_\lambda = |\lambda|!\frac{\prod_{1\leq j<k\leq l(\lambda)}\lambda_j-\lambda_k-j+k}{\prod_{j=1}^{l(\lambda)}(\lambda_j+l(\lambda)-j)!},
\eeq
we have the following proposition.


\begin{proposition}\label{prop:momemts char poly}
Let $\lambda=(N^{p})$. The moments of characteristic polynomial are given by \cite{Jonnadula2020}
\begin{align}
\mathbb{E}^{(H)}_N\left[\det(t-M)^p\right] &= C_\lambda(p)\sum_{\nu\subseteq \lambda}\left(\frac{-1}{2N}\right)^{\frac{|\lambda|-|\nu|}{2}}\frac{\dim V_\nu}{|\nu|!} D^{(H)}_{\lambda\nu}t^{|\nu|}\label{eq:mom_char_poly_gue}\\
\mathbb{E}^{(L)}_N\left[\det(t-M)^p\right] &=\left(\frac{-1}{2N}\right)^{Np}\frac{G_\lambda(p,\gamma)G_\lambda(p,0)}{G_0(p,0)}\sum_{\nu\subseteq \lambda}\frac{(-2N)^{|\nu|}}{G_\nu(p,\gamma)}\frac{\text{dim}\,V_\nu}{|\nu|!}D_{\lambda\nu}^{(L)}t^{|\nu|}\label{eq:mom_char_poly_lue}\\
\mathbb{E}^{(J)}_N\left[\det(t-M)^p\right] &= \left(\prod_{j=N}^{N+p-1}\frac{1}{\Gamma(2j+\gamma_1+\gamma_2+1)}\right)(-1)^{Np}\frac{G_\lambda(p,\gamma_1)G_\lambda(p,0)}{G_0(p,0)}\nonumber\\
&\quad\times \sum_{\nu\subseteq\lambda}\frac{(-1)^{|\nu|}}{|\nu|!\,G_\nu(p,\gamma_1)}\dim V_\nu \tilde{\mathcal{D}}^{(J)}_{\lambda\nu}t^{|\nu|}\label{eq:mom_char_poly_jue}
\end{align}
\end{proposition}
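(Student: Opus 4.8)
The plan is to start from Proposition~\ref{prop:correlations_charpoly}, which already identifies the $p$-point correlation $\mathbb{E}_N\big[\prod_{j=1}^p\det(t_j-M)\big]$ with the multivariate orthogonal polynomial $\varPhi_{(N^p)}$ in the $p$ variables $t_1,\dots,t_p$, where $\varPhi$ stands for $\mathscr{H}$, $\mathscr{L}^{(\gamma)}$ or $\mathscr{J}^{(\gamma_1,\gamma_2)}$ according to the ensemble, and then to pass to the confluent limit $t_1,\dots,t_p\to t$. Because both sides of the generalised dual Cauchy identity \eqref{eq:new_identity} are polynomials, $\varPhi_{(N^p)}$ is a genuine symmetric polynomial in $p$ variables, so the limit is obtained simply by substituting $t_j=t$; this gives $\mathbb{E}_N[\det(t-M)^p]=\varPhi_{(N^p)}(t,\dots,t)$, which, via L'Hôpital (or Taylor expansion) applied to the ratio-of-determinants representation \eqref{eq:gen_poly}, is also the determinant of derivatives displayed in \eqref{eq:moments derivatives}.

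The second step is to expand $\varPhi_{(N^p)}$ in the Schur basis through $\varPhi_\lambda=\sum_{\mu\subseteq\lambda}\Upsilon_{\lambda\mu}S_\mu$ with $n=p$ variables; since every $\mu\subseteq(N^p)$ has length at most $p$, no Schur polynomial in this finite sum vanishes identically. Evaluating at the diagonal point, $S_\mu(t,\dots,t)=t^{|\mu|}S_\mu(1^p)$, and the principal specialisation is rewritten using \eqref{eq:c lam n} as $S_\mu(1^p)=C_\mu(p)\,\dim V_\mu/|\mu|!$. For the GUE one then substitutes $\Upsilon_{\lambda\mu}^{(H)}=(-1/2N)^{(|\lambda|-|\mu|)/2}\,C_\lambda(p)\,D_{\lambda\mu}^{(H)}/C_\mu(p)$; the factor $C_\mu(p)$ cancels against the one produced by the principal specialisation, the constant $C_\lambda(p)$ with $\lambda=(N^p)$ comes out of the sum, and \eqref{eq:mom_char_poly_gue} results.

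For the LUE and JUE the same three operations---confluent limit, Schur expansion, principal specialisation---are repeated, the only additional ingredient being the elementary identity $C_\mu(p)=G_\mu(p,0)/G_0(p,0)$, which is immediate from the definitions \eqref{eq:C_lambda_and_G_lambda}. Using it to convert the $C_\mu(p)$ generated by \eqref{eq:c lam n} into $G$-function form, the factor $G_\mu(p,0)$ cancels against the corresponding denominator inside $\Upsilon^{(L)}$ (respectively $\Upsilon^{(J)}$); pulling the $\mu$-independent prefactors out of the sum and rewriting $(-1/2N)^{|\lambda|-|\mu|}=(-1/2N)^{Np}(-2N)^{|\mu|}$ in the Laguerre case and $(-1)^{|\lambda|+|\mu|}=(-1)^{Np}(-1)^{|\mu|}$ in the Jacobi case yields \eqref{eq:mom_char_poly_lue} and \eqref{eq:mom_char_poly_jue}.

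No step here is genuinely deep; the whole argument is a sequence of substitutions. The part requiring the most care is the bookkeeping of normalisation constants---tracking which factors in $\Upsilon_{\lambda\mu}$, in $S_\mu(1^p)$, and in the definitions \eqref{eq:C_lambda_and_G_lambda} depend on $\mu$ and which do not---and checking that the cancellation of $C_\mu(p)$ (or $G_\mu(p,0)$) between the Schur coefficient and the principal specialisation is exact, so that only the claimed $\mu$-independent prefactors survive. A small point worth recording explicitly is that taking the confluent limit commutes with the finite Schur expansion, which holds trivially because every term is polynomial in $t_1,\dots,t_p$.
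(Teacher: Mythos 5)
Your proposal is correct and follows essentially the same route the paper takes: the paper derives Proposition~\ref{prop:momemts char poly} precisely by setting $t_j=t$ in Proposition~\ref{prop:correlations_charpoly}, expanding $\varPhi_{(N^p)}$ in the Schur basis via the $\Upsilon_{\lambda\mu}$ coefficients, and applying the principal specialisation \eqref{eq:c lam n}. Your explicit bookkeeping of the cancellations of $C_\mu(p)$ and $G_\mu(p,0)$ (using $C_\mu(p)=G_\mu(p,0)/G_0(p,0)$) simply fills in details the paper leaves implicit.
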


These expansions are exact and hold for any finite $N$. In the next section we study the large $N$ asymptotics of the moments of characteristic polynomials. In the large $N$ limit, only the even moments are interesting, since the odd moments result in oscillatory behaviour.


\section{Asymptotics}\label{sec:asymptotics}
In this section, we consider the asymptotics of the moments of characteristic polynomials for the GUE. By exploiting the integral representation of the classical Hermite polynomials, Brezin and Hikami \cite{Brezin2000} showed that in the Dyson limit, $N\rightarrow\infty$, $t_i-t_j\rightarrow 0$ and $N(t_i-t_j)$ finite, the moments of characteristic polynomials are
\beq\label{eq:BH}
\mathbb{E}^{(H)}_N\left[\det(t-M)^{2p}\right] = e^{-Np}e^{Np\frac{t^2}{2}}(2\pi N\rho_{sc}(t))^{p^2}\prod_{j=0}^{p-1}\frac{j!}{(p+j)!},
\eeq
where the asymptotic eigenvalue density is 
\beq
\rho_{sc}(x) = \frac{1}{2\pi}\sqrt{4-x^2}.
\eeq
Using \eqref{eq:mom_char_poly_gue}, we show in Sec.~\ref{sec:at center} that
\beq
\mathbb{E}^{(H)}_N\left[\det M^{2p}\right] =  e^{-Np}(2N)^{p^2}\prod_{j=0}^{p-1}\frac{j!}{(p+j)!},
\eeq
which coincides with \eqref{eq:BH} for $t=0$. 
For $t\neq 0$, we discover that the asymptotic behaviour is different for even and odd dimensional GUE matrices and these contributions combine in a special way to produce the semi-circle law. These cases are discussed in Sec.~\ref{sec:at center} and Sec.~\ref{sec:outside center} in more detail.


\subsection{Centre of the semi-circle}\label{sec:at center}
Let $\lambda=(N^{2p})$. For any finite $N$ we have
\beq\label{eq:char_mom_t0}
\mathbb{E}_N^{(H)}\left[\det M^{2p}\right] = \left(-\frac{1}{2N}\right)^{Np}C_{\lambda}(2p)D^{(H)}_{\lambda 0}.
\eeq
\begin{proposition}
\beq\label{eq:D_lambda0}
\begin{split}
D^{(H)}_{\lambda 0} &= \prod_{j=0}^{p-1} \frac{j!^2}{(m+j)!^2}, \quad N=2m,\,\, m\in\mathbb{N},\\
D^{(H)}_{\lambda 0} &= (-1)^p\frac{m!}{(m+p)!}\prod_{j=0}^{p-1} \frac{j!^2}{(m+j)!^2}, \quad N=2m+1,\quad m\in\mathbb{N}.
\end{split}
\eeq
\end{proposition}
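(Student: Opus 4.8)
The plan is to compute the determinant in \eqref{eq:D her} directly for $\lambda=(N^{2p})$ and $\nu=0$, exploiting the parity pattern forced by the indicator $\mathbbm 1_{\lambda_j-\nu_k-j+k\equiv 0 \bmod 2}$. With $\lambda_j=N$ and $\nu_k=0$ the $(j,k)$ entry of the $2p\times 2p$ matrix is $1/\left(\tfrac{N-j+k}{2}\right)!$ when $N-j+k$ is even and $0$ otherwise (throughout we keep the convention $1/n!=0$ for $n<0$; a harmless-degeneracy check is needed and is addressed below). Now $N-j+k$ is even exactly when $j\equiv k\pmod 2$ if $N$ is even, and exactly when $j\not\equiv k\pmod 2$ if $N$ is odd. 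Hence, after simultaneously permuting rows and columns by the same permutation $\sigma$ that lists the odd indices $1,3,\dots,2p-1$ first and the even indices $2,4,\dots,2p$ second — an operation that multiplies the determinant by $\sgn(\sigma)^2=1$ — the matrix becomes block $2\times 2$: block-diagonal $\mathrm{diag}(A,A')$ for $N$ even, and block-antidiagonal $\begin{pmatrix}0&B\\ C&0\end{pmatrix}$ for $N$ odd.

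Next I would identify the blocks. Writing $N=2m$, the odd–odd block (rows $j=2a-1$, columns $k=2b-1$) and the even–even block (rows $j=2a$, columns $k=2b$) both collapse to the single $p\times p$ matrix $A_m:=\left[1/(m-a+b)!\right]_{a,b=1}^{p}$, so $D^{(H)}_{\lambda 0}=(\det A_m)^2$. Writing $N=2m+1$, the odd–even block is $B=A_{m+1}$ and the even–odd block is $C=A_m$; since for $p\times p$ blocks $\det\begin{pmatrix}0&B\\ C&0\end{pmatrix}=(-1)^{p}\det B\det C$, we get $D^{(H)}_{\lambda 0}=(-1)^{p}\det A_{m+1}\det A_m$. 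Everything then reduces to evaluating $\det A_m$.

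For this I would use the standard Vandermonde reduction: factor $1/(m-a+p)!$ out of row $a$ (legitimate since $m-a+p\ge m\ge 0$), after which the $(a,b)$ entry becomes $\prod_{i=b+1}^{p}(m-a+i)$, a monic polynomial in the variable $x_a:=m-a$ of degree $p-b$. Column operations replace these by the pure powers $x_a^{\,p-b}$, giving
\[
\det A_m=\Bigl(\prod_{a=1}^{p}\frac{1}{(m-a+p)!}\Bigr)\,\det\!\bigl[x_a^{\,p-b}\bigr]_{a,b=1}^{p}
=\Bigl(\prod_{a=1}^{p}\frac{1}{(m-a+p)!}\Bigr)\prod_{1\le a<a'\le p}(x_a-x_{a'})
=\frac{\prod_{j=0}^{p-1}j!}{\prod_{j=0}^{p-1}(m+j)!},
\]
using $x_a-x_{a'}=a'-a$ and $\prod_{a=1}^p(m-a+p)!=\prod_{j=0}^{p-1}(m+j)!$. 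Substituting into the two cases, and for $N$ odd simplifying $\prod_{j=0}^{p-1}(m+1+j)!=\tfrac{(m+p)!}{m!}\prod_{j=0}^{p-1}(m+j)!$, yields precisely the two displayed formulae.

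The one point requiring genuine care — the main (if mild) obstacle — is to justify the polynomial factorization of $A_m$ in the presence of the $1/n!=0$ convention: one must check that whenever $m-a+b<0$ the product $\prod_{i=b+1}^{p}(m-a+i)$ also vanishes. This holds because $m-a+b<0$ forces $a-m\ge b+1$, while $a\le p$ and $m\ge 0$ give $a-m\le p$, so the index $i=a-m$ lies in $\{b+1,\dots,p\}$ and contributes a zero factor; thus $A_m$ genuinely equals its factored form and the Vandermonde computation is valid for all admissible $N$. Apart from this, the only bookkeeping is the sign: the parity-reordering permutation contributes $\sgn(\sigma)^2=1$, so the sole surviving sign is the $(-1)^{p}$ produced by the antidiagonal block structure in the odd case, which is exactly the $(-1)^p$ appearing in the statement.
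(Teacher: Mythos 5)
Your proof is correct, and it takes a genuinely different route from the paper's. The paper works directly on the interleaved $2p\times 2p$ matrix: it factors $\prod_{j=0}^{p-1}(m+j)!^{-2}$ out of the rows so that the surviving entries become falling factorials, and then applies $p-1$ rounds of row subtractions $R_{2j}\to R_{2j}-R_{2j-2}$, $R_{2j-1}\to R_{2j-1}-R_{2j-3}$, invoking Pascal's rule at each stage, until an upper-triangular matrix is reached; the odd-$N$ case is only asserted to follow ``similarly''. You instead exploit the checkerboard support pattern structurally: a simultaneous parity reordering of rows and columns turns the matrix into $\mathrm{diag}(A_m,A_m)$ for $N=2m$ and into an antidiagonal block matrix with blocks $A_{m+1}$, $A_m$ for $N=2m+1$, reducing everything to the single $p\times p$ Hankel-type determinant $\det[1/(m-a+b)!]$, which you evaluate in closed form by a Vandermonde factorization. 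I checked the block identifications, the sign $(-1)^{p^2}=(-1)^p$ from the antidiagonal structure, the evaluation $\det A_m=\prod_{j=0}^{p-1}j!\big/\prod_{j=0}^{p-1}(m+j)!$, and the telescoping $\prod_{j=0}^{p-1}(m+1+j)!=\tfrac{(m+p)!}{m!}\prod_{j=0}^{p-1}(m+j)!$; all are correct and reproduce both displayed formulae. What your approach buys is a uniform and fully explicit treatment of the odd case (including the provenance of the $(-1)^p$), a clean closed-form lemma for $\det A_m$ that could be reused elsewhere, and a careful justification of the $1/n!=0$ convention for small $m$, which the paper's row-operation argument glosses over; the paper's approach, on the other hand, stays closer to the combinatorial identities (Pascal's rule) used throughout the rest of the text.
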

\begin{proof}
The determinant $D^{(H)}_{\lambda 0}$ can be evaluated as follows. Let $N=2m$, then 
\beq
\begin{split}
D^{(H)}_{\lambda 0}&=\det\left[\mathbbm{1}_{k-j=\text{0 mod 2}}\left((m+\tfrac{k-j}{2})!\right)^{-1}\right]_{1\leq j,k\leq p}\\
&= \prod_{j=0}^{p-1}\frac{1}{(m+j)!^2}
\begin{vmatrix}
1 & 0 & m & 0 &\dots &\frac{m!}{(m-p+1)!} & 0\\
0 & 1 & 0 & m &\dots & 0 & \frac{m!}{(m-p+1)!}\\
1 & 0 & m+1 & 0 &\dots & \frac{(m+1)!}{(m-p+2)!} &0\\
0 & 1 & 0 & m+1 &\dots &0 &\frac{(m+1)!}{(m-p+2)!}\\
  &   &   &     &\vdots &  &  \\
1 & 0 &m+p-1 & 0 &\dots &\frac{(m+p-1)!}{m!} &0\\
0 & 1 & 0 &m+p-1 &\dots &0 &\frac{(m+p-1)!}{m!}
\end{vmatrix}.
\end{split}
\eeq
Perform the row operations $R_{2j}=R_{2j}-R_{2j-2}$, $R_{2j-1}=R_{2j-1}-R_{2j-3}$ with $j$ running from $p,p-1,\dots,2$ in that order. Using the Pascal's rule for  binomial coefficients, we get
\beq
D^{(H)}_{\lambda 0} = (p-1)!^2\prod_{j=0}^{p-1}\frac{1}{(m+j)!^2}
\begin{vmatrix}
1 & 0 & m & 0 &\dots &\frac{m!}{(m-p+1)!} & 0\\
0 & 1 & 0 & m &\dots & 0 & \frac{m!}{(m-p+1)!}\\
0 & 0 & 1 & 0 &\dots & \frac{m!}{(m-p+2)!} & 0\\
0 & 0 & 0 & 1 &\dots & 0 & \frac{m!}{(m-p+2)!} \\
  &   &   &     &\vdots &  &  \\
0 & 0 & 1 & 0 &\dots & \frac{(m+p-2)!}{m!} & 0\\
0 & 0 & 0 & 1 &\dots & 0 & \frac{(m+p-2)!}{m!}
\end{vmatrix}.
\eeq
Next perform $R_{2j}=R_{2j}-R_{2j-2}$, $R_{2j-1}=R_{2j-1}-R_{2j-3}$ with $j$ running from $p,p-1,\dots,3$ in that order. Repeat this process $p-2$ more times to reach an upper triangular matrix with determinant given in  \eqref{eq:D_lambda0}. Similarly, $D^{(H)}_{\lambda 0} $ can be calculated for $N$ odd. 
\end{proof}
Define
\beq\label{eq:De_Do}
\begin{split}
&D_e(N)= \prod_{j=0}^{p-1}\frac{j!^2}{(m+j)!^2},\quad N=2m,\\
&D_o(N) = (-1)^p\frac{m!}{(m+p)!}\prod_{j=0}^{p-1}\frac{j!^2}{(m+j)!^2},\quad N=2m+1.
\end{split}
\eeq
Using this notation, \eqref{eq:char_mom_t0} reads 
 \beq\label{eq:char_mom_t0_final}
 \mathbb{E}^{(H)}_N\left[\det M^{2p}\right] = \left(-\frac{1}{2N}\right)^{Np}\times
 \begin{cases}
 C_\lambda(2p) D_e(N),\quad \text{$N$ even},\\
 C_\lambda(2p)D_o(N),\quad \text{$N$ odd}.
 \end{cases}
 \eeq
The functions $C_\lambda(2p) D_e(N)$ and $C_\lambda(2p) D_o(N)$, $\lambda=(N^{2p})$, can be expressed in terms of the ratios of factorials,
\beq\label{eq:C_lambda_D}
\begin{split}
C_{(N^{2p})}(2p) D_e(N) &= \prod_{j=0}^{p-1}\frac{(2m+j)!(2m+p+j)!}{\left(m+j\right)!^2}\frac{j!}{(p+j)!},\quad N=2m,\\
C_{(N^{2p})}(2p) D_o(N) &= (-1)^p\frac{m!}{(m+p)!}\prod_{j=0}^{p-1}\frac{(2m+1+j)!(2m+1+p+j)!}{\left(m+j\right)!^2}\frac{j!}{(p+j)!},\quad N=2m+1.
\end{split}
\eeq 
Denote
\beq
\gamma_p = \prod_{j=0}^{p-1}\frac{j!}{(p+j)!}.
\eeq 
The universal constant $\gamma_p$ is present in the moments for any finite $N$. To compute the large $N$ limit, we require the asymptotic expansion of \eqref{eq:C_lambda_D}. In App.~\ref{app:asymp_ratio_factorials}, we compute the first few terms in this expansion. As $N\rightarrow\infty$,
\beq\label{eq:asymp_ratio_factorials}
\begin{split}
&C_{(N^{2p})}(2p)D_e(N) \sim e^{-Np}(2N)^{Np+p^2}\gamma_p\left[1+\frac{p}{6N}(4p^2+1) + O(N^{-2})\right], \quad N\,\, \text{even},\\
&C_{(N^{2p})}(2p)D_o(N)\sim (-1)^pe^{-Np}(2N)^{Np+p^2}\gamma_p\left[1 + \frac{p}{3N}(2p^2-1) + O(N^{-2})\right], \, N\,\, \text{odd}.
\end{split}
\eeq
Plugging \eqref{eq:asymp_ratio_factorials} in \eqref{eq:char_mom_t0_final},  the leading order  behaviour of the moments for $N$ even and $N$ odd is 
\beq
\begin{split}
&e^{-Np}(2N)^{p^2}\gamma_p
\end{split}
\eeq
which coincides with \eqref{eq:BH} for $t=0$. On the other hand, the sub-leading behaviour depends on the parity of $N$.

\subsection{Away from the centre of the semi-circle}\label{sec:outside center}
For $t_j=t$,
\beq\label{eq:char_mom}
\begin{split}
 \mathbb{E}^{(H)}_N\left[\det(t-M)^{2p}\right] &= C_\lambda(2p)\sum_{\nu\subseteq\lambda}\left(-\frac{1}{2N}\right)^{\frac{|\lambda|-|\nu|}{2}}\frac{\text{dim}\, V_\nu}{|\nu|!} D^{(H)}_{\lambda\nu}t^{|\nu|}.
 \end{split}
 \eeq
To compute the asymptotics near the centre of the semi-circle, $t\neq 0$, we need to evaluate $D^{(H)}_{\lambda\nu}$ for a non-empty partition $\nu$. In Table.~\ref{table:D_lambda_nu}, we give the values of $D^{(H)}_{\lambda\nu}$ when $\nu$ is a partition of 2 and 4.
\begin{table}[h!]
  \centering
    \begin{tabular}{ l  p{5cm}  p{5cm} }
    \hline\hline
     $D^{(H)}_{\lambda\nu}$ & N=2m &N=2m+1\\\hline
    $D^{(H)}_{\lambda 0}$
     & $D_e$ & $D_o$\\[2ex]
    $D^{(H)}_{\lambda(2)}$ & $mpD_e$ &$mpD_o$\\[2ex]
    $D^{(H)}_{\lambda(1^2)}$ & $-mpD_e$ &$-(m+1)pD_o$\\[2ex]
    $D^{(H)}_{\lambda(4)}$ & $\frac{1}{2}m(m-1)p(p+1)D_e$ &$\frac{1}{2}m(m-1)p(p+1)D_o$\\[2ex]
    $D^{(H)}_{\lambda(3,1)}$ &$-\frac{1}{2}m(m-1)p(p+1)D_e$ &$-\frac{1}{2}m(m+1)p(p+1)D_o$\\[2ex]
    $D^{(H)}_{\lambda(2^2)}$ &$m^2p^2D_e$ &$m(m+1)p^2D_o$\\[2ex]
    $D^{(H)}_{\lambda(2,1^2)}$ &$-\frac{1}{2}m(m+1)p(p-1)D_e$ &$-\frac{1}{2}m(m+1)p(p-1)D_o$\\[2ex]
    $D^{(H)}_{\lambda(1^4)}$ &$\frac{1}{2}m(m+1)p(p-1)D_e$ &$\frac{1}{2}(m+2)(m+1)p(p-1)D_o$ \\
    \hline\hline
    \end{tabular}
    \caption{The values of determinant $D^{(H)}_{\lambda\nu}$ for $\lambda=(N^{2p})$. Determinants $D_e$ and $D_o$ are given in \eqref{eq:De_Do}.}
    \label{table:D_lambda_nu}
\end{table}

 Therefore,
 \beq
 \begin{split}
 \mathbb{E}^{(H)}_N\left[\det(t-M)^{2p}\right] =& \sum_{\nu\subseteq\lambda}\left(-\frac{1}{2N}\right)^{\frac{|\lambda|-|\nu|}{2}}\frac{\text{dim}\, V_\nu}{|\nu|!}t^{|\nu|}\,\text{poly}_{\frac{|\nu|}{2}}(N,p)\\
 &\times
 \begin{cases}
 C_\lambda(2p) D_e,\quad \text{$N$ even},\\
 C_\lambda(2p)D_o,\quad \text{$N$ odd},
 \end{cases}
 \end{split}
 \eeq
where $\text{poly}_{j}(N,p)$ denotes a polynomial of degree $j$ in variables $N$, $p$, and the explicit expressions are given in Table.~\ref{table:D_lambda_nu} for $j\leq 4$. By referring to \eqref{eq:C_lambda_D}, it is remarkable to see that the universal constant $\gamma_p$ is a factor of the moments for any finite $N$. The first few terms in the moments of characteristic polynomials are
\beq\label{eq:moments_t_exp}
\begin{split}
\mathbb{E}^{(H)}_N[\det(t-M)^{2p}]&=\left(-\frac{1}{2N}\right)^{Np}C_\lambda(2p)D_e\\
&\quad\times \left[1 + \left(\frac{2^2N^2}{4!}\right)Np\,t^4 + \left(\frac{2^3N^3}{6!}\right)2Np(2p-N)t^6 \right.\\
&\qquad\left. +\left(\frac{2^4N^4}{8!}\right)Np (4 N^2 - 17 Np + 16 p^2+2)t^8 \right.\\
&\qquad\left.+ O(t^{10}) \right],\qquad\qquad\qquad\qquad\qquad\qquad\qquad\qquad \text{$N$ even},\\
\mathbb{E}^{(H)}_N[\det(t-M)^{2p}]&=\left(-\frac{1}{2N}\right)^{Np}C_\lambda(2p)D_o\left[1 +\left(\frac{2N}{2!}\right)pt^2 +\left(\frac{2^2N^2}{4!}\right)(p^2-Np)\,t^4\right.\\
&\quad\left. + \left(\frac{2^3N^3}{6!}\right)p(2N^2-3Np+p^2)t^6 \right.\\
&\quad\left. +\left(\frac{2^4N^4}{8!}\right)p (- 4 N^3+ 15 N^2p  - 6 N p^2 -2N+ p^3-4p)t^8\right.\\
&\quad\left. +\,\, O(t^{10}) \right],\quad\qquad\qquad\qquad\qquad\qquad\qquad\qquad\qquad \text{$N$ odd}.
\end{split}
\eeq

Up to a factor of $(-1)^p$, both $C_\lambda D_e$ and $C_\lambda D_o$ have the same leading term,
\beq
e^{-Np}(2N)^{Np+p^2}\gamma_p,
\eeq
but they differ at sub-leading order as shown in \eqref{eq:asymp_ratio_factorials}. In App.~\ref{app:asymp_ratio_factorials}, we give the asymptotic expansion of $C_\lambda(N)D_e$ and $C_\lambda(N)D_o$ up to $O(N^{-6})$. Note that the coefficients of $t^{2j}$ in \eqref{eq:moments_t_exp} are polynomials in $N$, and both $C_\lambda D_e$ and $C_\lambda D_o$ have an expansion in $1/N$. Therefore for higher values of $j$, more sub-leading terms in the expansion of $C_\lambda(N)D_e$ and $C_\lambda(N)D_o$  are required to compute the correct coefficients of $t^{2j}$. But finding the exact asymptotic expansion of $C_\lambda D_e$ and $C_\lambda D_o$ is far from trivial as it involves a sequence of ratios of factorials, whose asymptotics is only known via recurrence relations.

In the next section, we focus on the second moment and show that we recover the semi-circle law only after averaging over even and odd matrix dimensional contributions. 

\subsubsection{Second moment}\label{sec:second moment}
The correlations of characteristic polynomials are connected to the correlation functions of random matrices \cite{Mehta2004,Forrester2010,Mezzadri2005}. In particular, 
\beq\label{eq:one-point to second moment}
R^{(N)}_1(t) = \frac{N!}{(N-1)!}\frac{\mathscr{Z}^{(H)}_{N-1}}{\mathscr{Z}^{(H)}_N}\exp\left(-\frac{N t^2}{2}\right)\mathbb{E}^{(H)}_{N-1}\left[\det(t-M)^{2}\right],
\eeq
where $R^{(N)}_1$ is the one-point density of eigenvalues of matrix size $N$. As the second moment of the characteristic polynomial is related to the density of states, it is natural to expect the semi-circle law in the limit $N\rightarrow\infty$ as given in \eqref{eq:BH}.

Re-writing \eqref{eq:BH} for $p=1$,
\beq
\lim_{N\rightarrow\infty}\frac{1}{2N}e^{N\left(1-\frac{ t^2}{2}\right)}\mathbb{E}^{(H)}_{N}\left[\det(t-M)^{2}\right] = \pi\rho_{sc}(t),
\eeq
which as an expansion in $t$ reads 
\beq\label{eq:p1case}
\lim_{N\rightarrow\infty}\frac{1}{2N}e^{N\left(1-\frac{ t^2}{2}\right)}\mathbb{E}^{(H)}_{N}\left[\det(t-M)^{2}\right] = 1 - \frac{1}{8}t^2 - \frac{1}{128}t^4 - \frac{1}{1024}t^6 + O(t^8).
\eeq
We now show that for $p=1$, starting with \eqref{eq:char_mom} we arrive at \eqref{eq:p1case}. Inserting the asymptotics of $C_\lambda D_e$ and $C_\lambda D_o$ in \eqref{eq:moments_t_exp}, one obtains
\beq
\begin{split}
&e^{-\frac{N t^2}{2}}\mathbb{E}^{(H)}_{N}\left[\det(t-M)^{2}\right]\\
&= 2Ne^{-N}\bigg[1+
\left(-\frac{5}{12}-\frac{1}{2}N\right)t^2+\left(-\frac{811}{77760}+\frac{17}{216}N +\frac{19}{72}N^2+\frac{1}{6}N^3\right)t^4\\
&\qquad+\left(-\frac{640879}{587865600}+\frac{799}{1749600}N-\frac{3667}{291600}N^2-\frac{323}{6480}N^3-\frac{31}{540}N^4-\frac{1}{45}N^5\right)t^6\\
&\qquad+ O(t^8)\bigg],\quad\text{$N$ even},\\
&e^{-\frac{N t^2}{2}}\mathbb{E}^{(H)}_{N}\left[\det(t-M)^{2}\right]\\
&= 2Ne^{-N}\bigg[1+\left(\frac{1}{6}+\frac{1}{2}N\right)t^2+\left(-\frac{101}{19440}-\frac{17}{216}N-\frac{19}{72}N^2-\frac{1}{6}N^3\right)t^4\\
&\quad +\left(-\frac{15853}{18370800}-\frac{799}{1749600}N+\frac{3667}{291600}N^2+\frac{323}{6480}N^3+\frac{31}{540}N^4+\frac{1}{45}N^5\right)t^6\\
&\qquad + O(t^8)\bigg],\quad\text{$N$ odd}.\\
\end{split}
\eeq
Treating the above expansions as a formal series in $N$ and taking their average gives \eqref{eq:p1case}. 
In App.~\ref{app:more about second moment}, it is shown that the average over even and odd $N$ coincides with the semi-circle law up to $O(t^{10})$. Also, a general expression for the coefficient of $t^{2j}$ in \eqref{eq:char_mom} is given for $p=1$.

\subsubsection{Higher moments}
For higher moments, the correlations of characteristic polynomials are related to the correlation functions of eigenvalues as 
\beq
R^{(N)}_p(t_1,\dots,t_p) = \frac{N!}{(N-p)!}\frac{\mathscr{Z}^{(H)}_{N-p}}{\mathscr{Z}^{(H)}_N}\exp\left(-\frac{N}{2}\sum_{j=1}^p t_j^2\right)\Delta^2(t_1,\dots,t_p)\mathbb{E}^{(H)}_{N-p}\left[\prod_{j=1}^p\det(t_j-M)^{2}\right],
\eeq
where $R^{(N)}_p(t_1,\dots,t_p)$ denotes a $p-$point correlation function of a GUE matrix of size $N$. The correlations of characteristic polynomials of matrices of size $N-p$ are related to the correlation functions of eigenvalues of matrices of size $N$. The Dyson sine-kernel for the $p-$point correlation function and \eqref{eq:BH} for the moments of characteristic polynomials are recovered in the Dyson limit: $t_i-t_j\rightarrow 0$, $N\rightarrow\infty$ and $N(t_i-t_j)$ is kept finite when $|t_j|<2$, $j=1,\dots,p$.

In terms of the Schur polynomials, $\lambda=(N^{2p})$,
\beq\label{eq:correlations charpoly}
\mathbb{E}^{(H)}_N\left[\prod_{j=1}^{2p}\det(t_j-M)\right]=C_{\lambda}(2p) \sum_{\nu\subseteq\lambda}\left(-\frac{1}{2N}\right)^{\frac{|\lambda|-|\nu|}{2}}\frac{1}{C_\nu(2p)}D^{(H)}_{\lambda \nu}S_\nu(t_1,\dots,t_{2p}).
\eeq
Computing the asymptotics of moments of characteristic polynomials in the Dyson limit using \eqref{eq:correlations charpoly} is highly non-trivial. Instead, we fix $t_j=t$, $j=1,\dots, 2p$, and give an expansion of the moments as a function of $t$ in the large $N$ limit.

As $N\rightarrow\infty$, up to $O(t^2)$,
\beq
\begin{split}
&\mathbb{E}^{(H)}_N\left[\det(t-M)^{2p}\right]  = (2N)^{p^2}e^{-Np}\gamma_p\left[1 + O(t^4)\right],\qquad\qquad\qquad\qquad\qquad\qquad\,\, N\,\,\text{even}\\
&\mathbb{E}^{(H)}_N\left[\det(t-M)^{2p}\right]  = (2N)^{p^2}e^{-Np}\gamma_p\left[1 + t^2\left(Np+\frac{p^2}{3}(2p^2-1) \right) + O(t^4)\right],\quad N\,\,\text{odd}.
\end{split}
\eeq
Note that the coefficient of $t^2$ is identically zero for even $N$, where as for odd $N$ it is a polynomial in $N$ and $p$. Treating the above expansions as a formal series in $N$ and taking their average gives
\beq\label{eq:upto t_sqr}
(2N)^{p^2}e^{-Np}\gamma_p\left(1 + \frac{Npt^2}{2} \right)\left(1 - \frac{p^2t^2}{8}\right)\left(1 + \frac{p}{12N}(8p^2-1)\right).
\eeq
By comparing with \eqref{eq:BH}, the terms in the first and second parenthesis of \eqref{eq:upto t_sqr} are the expansions of $e^{\frac{Npt^2}{2}}$ and $\pi\rho_{sc}(t)$ up to $O(t^2)$ respectively. The last factor in \eqref{eq:upto t_sqr} is sub-leading. Thus at $O(t^2)$, moments of characteristic polynomials in the Dyson limit and in the limit $t\rightarrow 0$ and $N\rightarrow\infty$ coincide.

Similarly, as $N\rightarrow\infty$, up to $O(t^4)$,
\beq
\begin{split}
\mathbb{E}^{(H)}_N\left[\det(t-M)^{2p}\right]  &= (2N)^{p^2}e^{-Np}\gamma_p\left[1 + t^4\frac{N^3p}{6}\Big(1+\frac{p}{6N}(4p^2+1) \right.\\
&\qquad \left. + \frac{p^2}{72N^2}(16p^4-16p^2-11)+\frac{p}{6480N^3}(320p^8-1200p^6 \right.\\
&\qquad \left. +708p^4+1265p^2-756)\Big) + O(t^{6})\right],\hspace{40pt} \text{$N$ even},\\
\mathbb{E}^{(H)}_N\left[\det(t-M)^{2p}\right]  &= (2N)^{p^2}e^{-Np}\gamma_p\left[1 + t^2\left(Np+\frac{p^2}{3}(2p^2-1) \right)\right.\\
&\qquad\left. + t^4\frac{N^3p}{6}\Big(-1 - \frac{2p}{3N}(p^2-2) -\frac{p^2}{18N^2}(4p^4-22p^2+13)\right.\\
&\qquad\left. -\frac{p}{405N^3}(20p^8-210p^6+483p^4-385p^2+54)\Big) \right.\\
&\qquad\left. +O(t^6)\right],\hspace{175pt} N\,\,\text{odd}.
\end{split}
\eeq
Taking average of the above series and factorising gives
\beq
\begin{split}
&(2N)^{p^2}e^{-Np}\gamma_p\left(1 +  \frac{Npt^2}{2} + \frac{N^2p^2t^4}{8} + O(t^6)\right)\left(1 - \frac{p^2t^2}{8} + \frac{t^4}{128}p^2(p^2-2) + O(t^6)\right)\\
&\times\left[1 + \frac{1}{N}\left(\frac{p}{12}(8p^2-1) + \frac{pt^2}{96}(13p^2-1)+O(t^4)\right) + \frac{1}{N^2}\left(\frac{p^2}{144}(32p^4 - 56p^2 + 17) + O(t^2)\right)\right],
\end{split}
\eeq
where the first two brackets correspond to the expansion of $e^{\frac{Npt^2}{2}}$ and $\pi\rho_{sc}(t)$, respectively, up to $O(t^4)$, and the last factor is sub-leading.   Thus, asymptotics calculated by letting first $t\rightarrow 0$ and then $N\rightarrow \infty$ coincides with that of Dyson limit asymptotics up to $O(t^4)$. For higher orders in $t$, mismatch between the two limits starts to appear.

\section{Secular coefficients}\label{sec:secular coeff}
Consider a matrix $M$ of size $N$. Its characteristic polynomial can be expanded as
\beq
\det(t-M)=\prod_{j=1}^N(t-x_j) = \sum_{j=0}^N(-1)^{j}\secu_j(M)t^{N-j},
\eeq
where $\secu_j$ is the $j^{th}$ secular coefficient of the characteristic polynomial. We have
\beq 
\secu_1(M) = \Tr M,\quad \secu_N(M) = \det(M).
\eeq
These secular coefficients are nothing but the elementary symmetric polynomials $e_j$ defined as
\beq
e_j(x_1,\dots,x_N) = \sum_{1\leq k_1<k_2<\dots <k_j\leq N}x_{k_1}x_{k_2}\dots x_{k_j}
\eeq
for $j\leq N$ and $e_j=0$ for $j>N$.

The correlations of secular coefficients and their connections to combinatorics have been studied in the past \cite{Forrester2006,Diaconis2004}. For example, the joint moments of secular coefficients of the unitary group are connected to the enumeration of magic squares: matrices with positive entries with prescribed row and column sum. In a similar way, the joint moments of secular coefficients of Hermitian ensembles, such as the GUE, are connected to matching polynomials of closed graphs. In this section, we compute these correlations and indicate their combinatorial properties.

\noindent\textit{Gaussian ensemble:} Elementary symmetric polynomials can be expanded in terms of multivariate Hermite polynomials as
\beq\label{eq:er to mulher}
e_r  = \sum_{j=0}^{\lfloor\frac{r}{2}\rfloor}\Psi^{(H)}_{(1^r)(1^{r-2j})}\mathscr{H}_{(1^{r-2j})},
\eeq
where
\beq\label{eq:psi her}
\Psi^{(H)}_{(1^r)(1^{r-2j})} = (-1)^j\frac{(N-r+2j)!}{(2N)^jj!(N-r)!}.
\eeq
Equivalently, we have
\beq
\begin{split}
e_{2r} = \sum_{j=0}^r\Psi^{(H)}_{(1^{2r})(1^{2j})}\mathscr{H}_{(1^{2j})},\quad e_{2r+1} = \sum_{j=0}^r\Psi^{(H)}_{(1^{2r+1})(1^{2j+1})}\mathscr{H}_{(1^{2j+1})},
\end{split}
\eeq
with
\beq
\begin{split}
\Psi^{(H)}_{(1^{2r})(1^{2j})} &= (-1)^{r-j}\frac{1}{(2N)^{r-j}(r-j)!}\frac{(N-2j)!}{(N-2r)!},\\
\Psi^{(H)}_{(1^{2r+1})(1^{2j+1})} &= (-1)^{r-j}\frac{1}{(2N)^{r-j}(r-j)!}\frac{(N-2j-1)!}{(N-2r-1)!}.
\end{split}
\eeq 
Because of the orthogonality of the $\mathcal{H}_\mu$, 
\beq
\mathbb{E}^{(H)}_N[\secu_r] = \mathbb{E}^{(H)}_N[e_r] = 
\begin{cases}
(-1)^{\frac{r}{2}}\frac{1}{(2N)^{\frac{r}{2}}\frac{r}{2}!}\frac{N!}{(N-r)!},\quad \text{if $r$ is even},\\
0, \hspace{8.8em}\text{if $r$ is odd}.
\end{cases}
\eeq
These expectations are nothing but the coefficients of Hermite polynomial of degree $N$. Thus,
\beq\label{eq:first moment gue}
\begin{split}
\mathbb{E}^{(H)}_N[\det(t-M)] &= \sum_{j=0}^{\lfloor\frac{N}{2}\rfloor}\mathbb{E}^{(H)}_N[\secu_{2j}(M)]t^{N-2j}=h_N(t),
\end{split}
\eeq
which coincides with \eqref{eq:char poly corre her rescaled} for $p=1$. The expectation $|N^j\mathbb{E}^{(H)}_N[\secu_{2j}(M)]|$ is equal to the number of $2j$ matchings in the complete graphs \cite{Diaconis2004,Forrester2006}.

By using \eqref{eq:er to mulher}, the second moment of the secular coefficient can also be computed. Similar to the univariate case,  multivariate Hermite polynomials $\mathscr{H}_\lambda$ corresponding to even and odd $|\lambda|$ do not mix. Hence, we obtain
\beq
\mathbb{E}^{(H)}_N[\secu_{2j}(M)\secu_{2k+1}(M)] = 0,
\eeq
and
\beq\begin{split}
\mathbb{E}^{(H)}_N[\secu_{2r}(M)\secu_{2s}(M)] &= \sum_{j=0}^r\sum_{k=0}^s\Psi^{(H)}_{(1^{2r})(1^{2j})}\Psi^{(H)}_{(1^{2s})(1^{2k})}\mathbb{E}^{(H)}_N[\mathscr{H}_{(1^{2j})}\mathscr{H}_{(1^{2k})}]\\
&= \sum_{j=0}^{\min(r,s)}\frac{1}{N^{2j}}\Psi^{(H)}_{(1^{2r})(1^{2j})}\Psi^{(H)}_{(1^{2s})(1^{2j})}C_{(1^{2j})}(N)\\
&=\left(-\frac{1}{2N}\right)^{r+s}\, \sum_{j=0}^{\min(r,s)}\frac{2^{2j}}{(r-j)!(s-j)!}\frac{N!(N-2j)!}{(N-2r)!(N-2s)!}.
\end{split}
\eeq
Similarly, we write
\beq
\mathbb{E}^{(H)}_N[\secu_{2r+1}(M)\secu_{2s+1}(M)] = \left(-\frac{1}{2N}\right)^{r+s}\,\sum_{j=0}^{\min(r,s)}\frac{2^{2j}}{(r-j)!(s-j)!}\frac{(N-1)!(N-2j-1)!}{(N-2r-1)!(N-2s-1)!}.
\eeq
Computing higher order correlations requires evaluating integrals involving a sequence of multivariate Hermite polynomials. Busbridge \cite{Busbridge1939,Busbridge1948} calculated these integrals for the univariate case, but the results are still unknown for the multivariate generalisation. Instead, we take a different approach by first expressing the product $\prod_j\secu_j(M))^{b_j}$ in terms of the $\mathscr{H}_\mu$ and then using orthogonality for the $\mathscr{H}_\mu$.

\begin{proposition}
Consider a partition $\lambda = (\lambda_1,\dots ,\lambda_l)$. We have
\beq
\mathbb{E}^{(H)}_N\big[\prod_{j=1}^l\secu_{\lambda_j}(M)\big] = 
\begin{cases}
\sum_\mu \frac{1}{(2N)^{\frac{\mu}{2}}\frac{|\mu|}{2}!}K_{\lambda^\prime\mu}\chi^{\mu}_{(2^{|\mu|/2})}C_\mu(N),\quad \text{if $|\lambda|$ is even},\\
0,\hspace{160pt} \text{otherwise}.
\end{cases}
\eeq
Here $K_{\lambda\mu}$ are Kostka numbers\footnote{The Kostka numbers are non-negative integers that count the number of semi-standard Young tableau of shape $\lambda$ and weight $\mu$.} and $\chi^\mu_\nu$ is the character of the symmetric group.
\end{proposition}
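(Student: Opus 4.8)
The plan is to reduce the joint moment to a single instance of the Schur--to--multivariate-Hermite change of basis \eqref{eq:schur to mulher} together with orthogonality of the $\mathscr{H}_\mu$. Since $\secu_r(M)=e_r(x_1,\dots,x_N)$, the product $\prod_{j=1}^{l}\secu_{\lambda_j}(M)$ is the product of elementary symmetric polynomials $e_\lambda=\prod_j e_{\lambda_j}$ in the eigenvalues. I would then use the classical expansion of a product of elementary symmetric functions in the Schur basis with Kostka-number coefficients, $e_\lambda=\sum_\mu K_{\lambda'\mu}\,S_\mu$, so that
\[
\mathbb{E}^{(H)}_N\Big[\prod_{j=1}^{l}\secu_{\lambda_j}(M)\Big]=\sum_{\mu}K_{\lambda'\mu}\,\mathbb{E}^{(H)}_N[S_\mu],
\]
the sum being effectively over $\mu\vdash|\lambda|$ with $\ell(\mu)\le N$ (otherwise $S_\mu\equiv0$). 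Moreover $e_\lambda(-x)=(-1)^{|\lambda|}e_\lambda(x)$ while the GUE weight is invariant under $x\mapsto-x$, so the whole expectation vanishes when $|\lambda|$ is odd; this disposes of the second case.

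For $|\lambda|$ even, everything hinges on evaluating $\mathbb{E}^{(H)}_N[S_\mu]$. Expanding $S_\mu=\sum_{\nu\subseteq\mu}\Psi^{(H)}_{\mu\nu}\mathscr{H}_\nu$ via \eqref{eq:schur to mulher}, using $\mathscr{H}_\emptyset=1$ (the determinant \eqref{eq:gen_poly} for the empty partition reduces to $\det[\varphi_{N-k}(x_i)]/\Delta(x)=1$ by monicity), and using $\mathbb{E}^{(H)}_N[\mathscr{H}_\nu]=\delta_{\nu\emptyset}$ (the $\mathscr{H}_\mu$ orthogonality relation with the second index empty, since $C_\emptyset(N)=1$), only the $\nu=\emptyset$ term survives:
\[
\mathbb{E}^{(H)}_N[S_\mu]=\Psi^{(H)}_{\mu\emptyset}=(2N)^{-|\mu|/2}\,C_\mu(N)\,D^{(H)}_{\mu\emptyset}.
\]
Thus the proposition is equivalent, after writing $m=|\mu|/2$, to the purely combinatorial identity
\[
D^{(H)}_{\mu\emptyset}=\det\!\Big[\mathbbm{1}_{\mu_j-j+k\equiv0\,(2)}\,\big((\tfrac{\mu_j-j+k}{2})!\big)^{-1}\Big]_{j,k=1}^{\ell(\mu)}=\frac{\chi^\mu_{(2^m)}}{m!}.
\]

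The route I would take for this identity is to recognise $D^{(H)}_{\mu\emptyset}$ (which is $N$-independent) as the coefficient of $S_\mu$ in the symmetric function $\frac{1}{m!}p_2^{\,m}$. The connection between monomials and the rescaled monic Hermite polynomials $h_n$ is $x^a=\sum_{r}\frac{a!}{(2N)^r\,r!\,(a-2r)!}\,h_{a-2r}(x)$, the coefficient form of $e^{xt}=e^{t^2/2}\sum_n \mathrm{He}_n(x)t^n/n!$; substituting this into the Jacobi--Trudi-type determinant \eqref{eq:gen_poly} for $S_\mu$ and extracting the coefficient of the $\mathscr{H}_\emptyset=1$ component reproduces exactly $(2N)^{-m}C_\mu(N)D^{(H)}_{\mu\emptyset}$, i.e. $D^{(H)}_{\mu\emptyset}$ is the $S_\mu$-coefficient of $\frac{1}{m!}p_2^{\,m}$; on the other hand $p_{(2^m)}=\sum_{\mu\vdash 2m}\chi^\mu_{(2^m)}S_\mu$ from $p_\rho=\sum_\mu\chi^\mu_\rho S_\mu$, which gives the right-hand side. (Alternatively one may evaluate the determinant directly: the parity indicator splits it, after sorting rows by the parity of $\mu_j-j$ and columns by that of $k$, into two blocks of the shape $\det[1/(\rho_i-i+j)!]=\dim V_\rho/|\rho|!$ carried by the $2$-quotient $(\mu^{(0)},\mu^{(1)})$ of $\mu$, and $\binom{m}{|\mu^{(0)}|}\dim V_{\mu^{(0)}}\dim V_{\mu^{(1)}}/m!$ is the classical expression for $\chi^\mu_{(2^m)}$, with the vanishing when $\mu$ has non-trivial $2$-core matching $\chi^\mu_{(2^m)}=0$.) Assembling the pieces, $\mathbb{E}^{(H)}_N[S_\mu]=\frac{C_\mu(N)}{(2N)^m\,m!}\,\chi^\mu_{(2^m)}$, and summing against $K_{\lambda'\mu}$ yields the stated formula.

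The only non-routine step is this last identity $D^{(H)}_{\mu\emptyset}=\chi^\mu_{(2^m)}/m!$: the reduction to it uses only the change-of-basis formulas already recorded in the paper, whereas the identity itself requires either a careful tracking of the constant ($\mathscr{H}_\emptyset$) component through the determinant \eqref{eq:gen_poly}, or — in the $2$-quotient approach — a somewhat delicate matching of the sign of $\chi^\mu_{(2^m)}$ with the sign produced by reordering rows and columns by parity.
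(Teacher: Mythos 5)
Your reduction---writing $\prod_j\secu_{\lambda_j}(M)=e_\lambda$, expanding $e_\lambda$ in the Schur basis with Kostka-number coefficients, then expanding each $S_\mu$ in the $\mathscr{H}_\nu$ and invoking orthogonality so that only the constant term $\Psi^{(H)}_{\mu 0}$ survives---is exactly the paper's proof, and your parity argument for the odd-$|\lambda|$ case is a fine (slightly more direct) substitute for the paper's appeal to orthogonality. The only divergence is that the paper simply cites \cite{Jonnadula2020} for the identity $\Psi^{(H)}_{\mu 0}=\tfrac{1}{(2N)^{|\mu|/2}(|\mu|/2)!}\chi^{\mu}_{(2^{|\mu|/2})}C_\mu(N)$, whereas you sketch a correct self-contained derivation of the equivalent statement $D^{(H)}_{\mu 0}=\chi^\mu_{(2^{m})}/m!$ (most cleanly seen by reading off the Schur expansion of $e^{p_2/2}=\prod_i e^{x_i^2/2}$ from the Cauchy identity), which is a worthwhile addition but not a different route.
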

\begin{proof}
For a partition $\lambda$, denote 
\beq
e_\lambda = e_{\lambda_1}e_{\lambda_2}\dots. 
\eeq
Elementary symmetric polynomials $e_\lambda$ can be expanded in Schur basis as follows:
\beq
e_\lambda = \sum_{\mu}K_{\lambda^\prime \mu}S_\mu,
\eeq
where $K_{\lambda\mu}$ are the Kostka numbers \cite{Macdonald1998} and $\mu$ is a partition of $|\lambda|$. Using \eqref{eq:schur to mulher},
\beq
e_\lambda = \sum_{\mu\vdash |\lambda|}\sum_{\nu\subseteq\mu}K_{\lambda^\prime\mu}\Psi^{(H)}_{\mu\nu}\mathscr{H}_\nu.
\eeq
When $|\lambda|$ is odd, $\mathbb{E}^{(H)}_N[e_\lambda]=0$ due to the orthogonality of multivariate Hermite polynomials. When $|\lambda|$ is even, 
\beq
\begin{split}
\mathbb{E}^{(H)}_N[e_\lambda] &= \mathbb{E}^{(H)}_N\big[\prod_{j=1}^l\secu_{\lambda_j}(M)\big]\\
 &= \mathbb{E}^{(H)}_N\big[\sum_{\mu}\sum_\nu K_{\lambda^\prime\mu}\Psi^{(H)}_{\mu  \nu}\mathscr{H}_\nu\big]\\
 &=K_{\lambda^\prime\mu}\Psi^{(H)}_{\mu 0}.
\end{split}
\eeq
It can be shown that \cite{Jonnadula2020}
\beq
\Psi^{(H)}_{\mu 0} = \frac{1}{(2N)^{\frac{\mu}{2}}\frac{|\mu|}{2}!}\chi^{\mu}_{(2^{|\mu|/2})}C_\mu(N).
\eeq
Putting everything together completes the proof.
\end{proof}

\noindent\textit{Laguerre ensemble:} All the calculations discussed for the Gaussian ensemble can be extended to the Laguerre and the Jacobi ensembles. 

The polynomials $e_r$ can be expanded as 
\beq\label{eq:er to mullag}
e_r  = \sum_{j=0}^{r}\Psi^{(L)}_{(1^r)(1^{j})}\mathscr{L}^{(\gamma)}_{(1^{j})},
\eeq
where
\beq\label{eq:psi lag}
\Psi^{(L)}_{(1^r)(1^{j})} = \frac{1}{(2N)^{r-j}}\frac{1}{(r-j)!}\frac{(N-j)!}{(N-r)!}\frac{\Gamma(N-j+\gamma +1)}{\Gamma(N-r+\gamma + 1)}.
\eeq
By using \eqref{eq:ortho mullag} we arrive at
\beq
\mathbb{E}^{(L)}_N[\secu_r] = \mathbb{E}^{(L)}_N[e_r] = \frac{1}{(2N)^r}\frac{1}{r!}\frac{N!}{(N-r)!}\frac{\Gamma(N+\gamma +1)}{\Gamma(N-r+\gamma + 1)},
\eeq
which are the absolute values of the coefficients of the Laguerre polynomial of degree $N$. For the characteristic polynomial, we have
\beq\label{eq:first moment lue}
\mathbb{E}^{(L)}_N[\det(t-M)] = \sum_{j=0}^N(-1)^j\mathbb{E}^{(L)}_N[\secu_j(M)]t^{N-j} = l_N^{(\gamma)}(t).
\eeq
The correlations of secular coefficients can be computed similar to the Gaussian case.
\begin{proposition}
Let  $\lambda = (\lambda_1,\dots ,\lambda_l)$, we have
\beq
\mathbb{E}^{(L)}_N\big[\prod_{j=1}^l\secu_{\lambda_j}(M)\big] =\sum_{\mu\vdash |\lambda|}\frac{1}{(2N)^{|\lambda|}}\frac{G_\mu(N,\gamma)G_\mu(N,0)}{G_0(N,\gamma)G_0(N,0)}\frac{\chi^\mu_{(1^{|\mu|})}}{|\lambda|!} K_{\lambda^\prime\mu} 
\eeq
\end{proposition}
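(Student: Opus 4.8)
The plan is to mimic the proof just given for the Gaussian ensemble, replacing the Hermite data by the Laguerre data that has already been assembled in the excerpt. First I would recall the two expansions we have in hand: the expansion of elementary symmetric functions in the Schur basis, $e_\lambda = \sum_{\mu\vdash|\lambda|} K_{\lambda'\mu} S_\mu$, and the change of basis \eqref{eq:schur to mulher} from Schur polynomials to the multivariate Laguerre polynomials $\mathscr{L}^{(\gamma)}_\nu$ with coefficients $\Psi^{(L)}_{\mu\nu}$ given in \eqref{eq: schur to mul jac coef}--\eqref{eq:D lag} (i.e. $\Psi^{(L)}_{\mu\nu} = (2N)^{-(|\mu|-|\nu|)}\frac{G_\mu(N,\gamma)G_\mu(N,0)}{G_\nu(N,\gamma)G_\nu(N,0)}D^{(L)}_{\mu\nu}$). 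Composing them gives $e_\lambda = \sum_{\mu\vdash|\lambda|}\sum_{\nu\subseteq\mu} K_{\lambda'\mu}\Psi^{(L)}_{\mu\nu}\mathscr{L}^{(\gamma)}_\nu$.

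Next I would take the GUE/LUE expectation term by term and invoke the orthogonality relation \eqref{eq:ortho mullag}: since $\mathscr{L}^{(\gamma)}_\nu$ is orthogonal to the constant $\mathscr{L}^{(\gamma)}_0 \equiv 1$ under the LUE density, only the $\nu = \emptyset$ term survives, and it contributes with weight $1$ (the normalization is chosen so that $\mathbb{E}^{(L)}_N[\mathscr{L}^{(\gamma)}_0] = 1$). Hence $\mathbb{E}^{(L)}_N\big[\prod_j\secu_{\lambda_j}(M)\big] = \mathbb{E}^{(L)}_N[e_\lambda] = \sum_{\mu\vdash|\lambda|} K_{\lambda'\mu}\,\Psi^{(L)}_{\mu 0}$, exactly parallel to the Gaussian computation. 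Note that, unlike the Hermite case, there is no parity restriction: $D^{(L)}_{\mu 0}$ need not vanish when $|\mu|$ is odd, which is why the Laguerre statement has no ``otherwise $0$'' branch.

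It then remains to identify $\Psi^{(L)}_{\mu 0}$ explicitly. Setting $\nu = \emptyset$ in the formula for $\Psi^{(L)}$ gives $\Psi^{(L)}_{\mu 0} = (2N)^{-|\mu|}\frac{G_\mu(N,\gamma)G_\mu(N,0)}{G_0(N,\gamma)G_0(N,0)} D^{(L)}_{\mu 0}$, where $D^{(L)}_{\mu 0} = \det\big[\mathbbm{1}_{\mu_i - i + j \geq 0}\,\frac{1}{(\mu_i - i + j)!}\big]_{1\le i,j\le l(\mu)}$. The key step is to recognise this determinant: by the Jacobi--Trudi-type identity, $\det[1/(\mu_i - i + j)!] = \chi^\mu_{(1^{|\mu|})}/|\mu|!$, i.e. it equals $\dim V_\mu / |\mu|!$, the number of standard Young tableaux of shape $\mu$ divided by $|\mu|!$ (this is the specialization of $S_\mu$ at $h_k \mapsto 1/k!$, which is the exponential-specialization identity; it is also implicit in \eqref{eq:c lam n} together with $C_\mu(n)\to$ its large-argument form). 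Substituting $D^{(L)}_{\mu 0} = \chi^\mu_{(1^{|\mu|})}/|\lambda|!$ (with $|\mu|=|\lambda|$) yields precisely $\mathbb{E}^{(L)}_N\big[\prod_j\secu_{\lambda_j}(M)\big] = \sum_{\mu\vdash|\lambda|}(2N)^{-|\lambda|}\frac{G_\mu(N,\gamma)G_\mu(N,0)}{G_0(N,\gamma)G_0(N,0)}\frac{\chi^\mu_{(1^{|\mu|})}}{|\lambda|!}K_{\lambda'\mu}$, which is the claim.

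The main obstacle is the combinatorial identification of $D^{(L)}_{\mu 0}$ with $\chi^\mu_{(1^{|\mu|})}/|\lambda|!$; everything else is a direct transcription of the Gaussian argument. This identity is the Laguerre analogue of the fact, used silently in the Hermite proof, that $\Psi^{(H)}_{\mu 0} = \frac{1}{(2N)^{|\mu|/2}(|\mu|/2)!}\chi^\mu_{(2^{|\mu|/2})}C_\mu(N)$; here the relevant character is evaluated at the identity class $(1^{|\mu|})$ rather than at $(2^{|\mu|/2})$, reflecting that the Laguerre weight has no even/odd symmetry forcing the degree of $\mathscr{L}^{(\gamma)}_\nu$ to drop in steps of two. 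I would cite \cite{Jonnadula2020} for this evaluation, in the same way the Gaussian proof does, or alternatively derive it from the determinantal formula for $S_\mu(1,\dots,1)$ together with \eqref{eq:c lam n}.
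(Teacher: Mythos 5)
Your proposal is correct and follows essentially the same route as the paper: expand $e_\lambda$ via Kostka numbers into Schur polynomials, convert to the multivariate Laguerre basis, kill all terms with $\nu\neq\emptyset$ by the orthogonality relation \eqref{eq:ortho mullag}, and identify $\Psi^{(L)}_{\mu 0}$ with $\frac{1}{(2N)^{|\mu|}}\frac{G_\mu(N,\gamma)G_\mu(N,0)}{G_0(N,\gamma)G_0(N,0)}\frac{\chi^\mu_{(1^{|\mu|})}}{|\mu|!}$. The only difference is that you supply the justification $D^{(L)}_{\mu 0}=\det\bigl[1/(\mu_i-i+j)!\bigr]=\chi^\mu_{(1^{|\mu|})}/|\mu|!$ (the standard determinantal count of standard Young tableaux) explicitly, where the paper simply cites \cite{Jonnadula2020}.
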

\begin{proof}
The proof is similar to the Gaussian case. By writing 
\beq
e_\lambda = \sum_\mu\sum_{\nu\subseteq |\lambda|} K_{\lambda^\prime\mu}\Psi^{(L)}_{\mu\nu}\mathscr{L}^{(\gamma)}_\nu,
\eeq
and using \eqref{eq:ortho mullag} along with the result \cite{Jonnadula2020}
\beq
\Psi^{(L)}_{\mu 0} = \frac{1}{(2N)^{|\mu|}}\frac{G_\mu(N,\gamma)G_\mu(N,0)}{G_0(N,\gamma)G_0(N,0)}\frac{\chi^\mu_{(1^{|\mu|})}}{|\mu|!}
\eeq
proves the proposition.
\end{proof}

\noindent{\textit{Jacobi ensemble.}} The $e_r$ can be expanded as
\beq
e_r = \sum_{j=0}^{r}\Psi^{(J)}_{(1^r)(1^{j})}\mathscr{J}^{(\gamma_1,\gamma_2)}_{(1^{j})},
\eeq
where $\Psi^{(J)}_{\lambda\nu}$ is given in \eqref{eq: schur to mul jac coef}. The expected values of the $e_r$ are related to the coefficients of the Jacobi polynomial of degree $N$.
\beq\label{eq:first moment jue}
\mathbb{E}^{(J)}_N[\det(t-M)] = \sum_{j=0}^{N}(-1)^j\mathbb{E}^{(J)}_N[\secu_{j}(M)]t^{N-j}=j^{(\gamma_1,\gamma_2)}_N(t)
\eeq

\textit{Acknowledgements}. FM is grateful for support from the  University Research Fellowship of the University of Bristol. JPK is pleased to acknowledge support from a Royal Society Wolfson Research Merit Award and ERC Advanced Grant 740900 (LogCorRM).

\appendix
\section*{Appendix}

\section{Asymptotics of ratio of factorials}\label{app:asymp_ratio_factorials}
The asymptotics of the ratio of factorials can be computed as follows. First we look at $C_\lambda(2p)D_{e}$ with $\lambda=(2m,\dots,2m)$. Consider
\beq
\begin{split}
\frac{(2m+j)!(2m+p+j)!}{(m+j)!^2}&=(2m)^p\frac{(2m+j)!^2}{(m+j)!^2}\prod_{a=1}^p\left(1+\frac{j+a}{2m}\right).
\end{split}
\eeq
Now, one can see that
\beq
\frac{(2m+j)!}{(m+j)!} = 2^{j+1}\frac{\Gamma(2m)}{\Gamma(m)}\prod_{a=0}^j\frac{1+\frac{a}{2m}}{1+\frac{a}{m}}.
\eeq
Using the duplication formula for the Gamma functions
\beq\label{eq:duplication_formula}
\Gamma(z)\Gamma\left(z+{\frac{1}{2}}\right)=2^{1-2z}\sqrt{\pi}\,\Gamma(2z)
\eeq
and Stirling's series
\beq\label{eq:stirlings_series}
\Gamma(z+h)\sim \sqrt{2\pi}e^{-z}z^{z+h-\frac{1}{2}}\prod_{j=2}^\infty \exp\left({\frac{(-1)^jB_j(h)}{j(j-1)z^{j-1}}}\right),\quad z\rightarrow\infty,
\eeq
the asymptotic expansion for the ratio of Gamma functions can be found. Here $B_j$ is the Bernoulli polynomial of degree $j$. Combining all the formulae, up to first order correction,
\beq
C_{((2m)^{2p})}(2p) D_e\sim e^{-2mp}2^{4mp+2p^2}m^{2mp+p^2}\left(\prod_{j=0}^{p-1}\frac{j!}{(p+j)!}\right)\left[1 + \frac{p}{12m}(4p^2+1)+O(m^{-2})\right].
\eeq
Similarly for the case $C_\lambda(2p)D_o$, we obtain
\beq
\begin{split}
\frac{(2m+1+p+j)!(2m+1+j)!}{(m+j)!^2}= (2m+1)^p\frac{(2m+1+j)!^2}{(m+j)!^2}\prod_{a=1}^p\left(1 + \frac{j+a}{2m+1}\right).
\end{split} 
\eeq
Let $z=m+\frac{1}{2}$, then
\beq
\frac{(2m+1+j)!}{(m+j)!} = \frac{\Gamma(2z+j+1)}{\Gamma(z+\frac{1}{2}+j)} = 2^{j+1}z\frac{\Gamma(2z)}{\Gamma(z+\frac{1}{2})}\prod_{a=1}^j\frac{1+\frac{a}{2z}}{1+\frac{2a-1}{2z}},
\eeq
and
\beq
\frac{m!}{(m+p)!} = \frac{\Gamma(z+\frac{1}{2})}{\Gamma(z+p+\frac{1}{2})} = \frac{1}{z^p} \prod_{a=1}^p\frac{1}{1+\frac{2a-1}{2z}}
\eeq
Combining the above formulae and using \eqref{eq:duplication_formula} and \eqref{eq:stirlings_series},
\beq
\begin{split}
C_{((2m+1)^{2p})}D_o\equiv C_{((2z)^{2p})}D_o\sim & (-1)^pe^{-2zp}z^{p^2+2pz}2^{2p^2+4pz}\left(\prod_{j=0}^{p-1}\frac{j!}{(p+j)!}\right)\\&\quad\times \left[1 + \frac{p}{6z}(2p^2-1)+O(z^{-2})\right]
\end{split}
\eeq
Higher order corrections can also be calculated with some effort or using any commercial software like Mathematica. Writing in terms of the matrix size $N$, as $N\rightarrow\infty$, we have
\beq
\begin{split}
&C_{(N^{2p})}D_e \sim e^{-Np}(2N)^{Np+p^2}\left(\prod_{j=0}^{p-1}\frac{j!}{(p+j)!}\right)\left[1+\frac{p}{6N}(4p^2+1) + \frac{p^2}{72N^2}(16p^4-16p^2-11) \right.\\
&\qquad\qquad\qquad\left. +\frac{p}{6480N^3}(320p^8-1200p^6+708p^4+1265p^2-756) \right.\\
&\qquad\qquad\qquad\left. +\frac{p^2}{155520N^4} (  1280 p^{10}- 10240 p^8+ 25248 p^6 - 6400 p^4- 56371 p^2 +51408)\right.\\
&\qquad\qquad\qquad \left. +\frac{p}{6531840N^5}\Big(7168 p^{14}- 98560 p^{12}  +499072 p^{10}   - 982688 p^8 - 399844 p^6\right.\\
&\qquad\qquad\qquad\qquad\left. + 4606735 p^4- 5598936 p^2 +1607040 \Big)\right.\\
&\qquad\qquad\qquad\left. + \frac{p^2}{1175731200N^6}\Big(143360 p^{16} -3010560 p^{14} + 25294080 p^{12} - 103093760 p^{10} \right.\\
&\qquad\qquad\qquad\qquad\left.+ 158864016 p^8  +298943760 p^6- 1697420809 p^4  + 2663679600 p^2 -1390123296\Big)\right.\\
&\qquad\qquad\qquad\left. +O\left(\frac{1}{N^{7}}\right)\right], \qquad\qquad\qquad\qquad\qquad\qquad\qquad\qquad\qquad\qquad\qquad\qquad N\,\, \text{even}.
\end{split}
\eeq

\beq
\begin{split}
&C_{(N^{2p})}D_o\sim (-1)^pe^{-Np}(2N)^{Np+p^2}\left(\prod_{j=0}^{p-1}\frac{j!}{(p+j)!}\right)\left[1 + \frac{p}{3N}(2p^2-1) + \frac{p^2}{18N^2}(4p^4-10p^2+7)\right.\\
&\left. \qquad\qquad \qquad + \frac{p}{810N^3}(40p^8-240p^6+516p^4-455p^2+108)\right.\\
&\qquad\qquad\qquad\left.+\frac{p^2}{9720N^4}(80 p^{10}- 880 p^8+ 3828 p^6 - 8356 p^4 + 9509 p^2-4320)\right.\\
&\qquad\qquad\qquad \left. +  \frac{p}{204120N^5}\Big(224\,p^{14} - 3920\, p^{12} +28616\,p^{10}  - 113428\,p^8 + 266818\,p^6\right.\\
&\qquad\qquad\qquad\qquad\left. - 372127\,p^4 + 255528\,p^2 -51840 \Big)\right.\\
&\qquad\qquad\qquad\left.  +\frac{p^2}{18370800N^6}\Big(2240\,p^{16} - 57120\,p^{14}+ 628320\,p^{12} - 3919160\,p^{10}  + 15363624\,p^8\right.\\
&\qquad\qquad\qquad\qquad\left. - 39481170\,p^6 + 65605589\,p^4  - 62864640\,p^2 +25046496 \Big)\right.\\
  &\qquad\qquad\qquad\left. + O\left(\frac{1}{N^7}\right)\right], \qquad\qquad\qquad\qquad\qquad\qquad\qquad\qquad\qquad\qquad\qquad\qquad N\,\, \text{odd}.
\end{split}
\eeq

\section{More on the second moment}
\label{app:more about second moment}
Fix $\lambda=(N,N)$. The second moment of the characteristic polynomial is given by
\beq\label{eq:appc.0}
\mathbb{E}^{(H)}_{N}\left[\det(t-M)^{2}\right] = \left(\frac{-1}{2N}\right)^NC_\lambda(2)\sum_{\nu\subseteq \lambda}\frac{1}{|\nu|!}(-2N)^{\frac{|\nu|}{2}}D^{(H)}_{\lambda\nu}\text{dim}\,V_\nu\,t^{|\nu|}.
\eeq
Let $\nu=(\nu_1,\nu_2)\subseteq \lambda$. Since $|\nu|$ is even, either both $\nu_1$, $\nu_2$ are even or both of them are odd. 
For $N=2m$, $m\in\mathbb{N}$, 
\beq
D^{(H)}_{\lambda\nu} = 
\begin{cases}
\frac{1}{\left(m-\frac{\nu_1}{2}\right)!\left(m-\frac{\nu_2}{2}\right)!},\qquad\qquad\,\, \text{$\nu_1$, $\nu_2$ are even},\\
-\frac{1}{\left(m-\frac{\nu_1+1}{2}\right)!\left(m-\frac{\nu_2-1}{2}\right)!},\quad\quad\text{$\nu_1$, $\nu_2$ are odd}.
\end{cases}
\eeq
Therefore,
\beq\label{eq:appc.1}
C_\lambda(2)D^{(H)}_{\lambda\nu} = 
(2m)!(2m+1)!\begin{cases}
\frac{1}{\left(m-\frac{\nu_1}{2}\right)!}\frac{1}{\left(m-\frac{\nu_2}{2}\right)!},\qquad\quad\,\, \text{$\nu_1$, $\nu_2$ are even},\\
-\frac{1}{\left(m-\frac{\nu_1+1}{2}\right)!}\frac{1}{\left(m-\frac{\nu_2-1}{2}\right)!},\quad\,\text{$\nu_1$, $\nu_2$ are odd}.
\end{cases}
\eeq
Similarly, for $N=2m+1$, $m\in\mathbb{N}$, we have
\beq
D^{(H)}_{\lambda\nu} = 
\begin{cases}
-\frac{1}{\left(m-\frac{\nu_1}{2}\right)!\left(m-\frac{\nu_2-2}{2}\right)!},\qquad\quad \text{$\nu_1$, $\nu_2$ are even},\\
\frac{1}{\left(m-\frac{\nu_1-1}{2}\right)!\left(m-\frac{\nu_2-1}{2}\right)!},\qquad\quad\text{$\nu_1$, $\nu_2$ are odd},
\end{cases}
\eeq
and
\beq\label{eq:appc.2}
C_\lambda(2)D^{(H)}_{\lambda\nu} = 
(2m+1)!(2m+2)!
\begin{cases}
-\frac{1}{\left(m-\frac{\nu_1}{2}\right)!}\frac{1}{\left(m-\frac{\nu_2-2}{2}\right)!},\quad \text{$\nu_1$, $\nu_2$ are even},\\
\frac{1}{\left(m-\frac{\nu_1-1}{2}\right)!}\frac{1}{\left(m-\frac{\nu_2-1}{2}\right)!},\quad\text{$\nu_1$, $\nu_2$ are odd}.
\end{cases}
\eeq
For a partition of length 2, $\nu=(\nu_1,\nu_2)$,
\beq\label{eq:appc.3}
\frac{1}{|\nu|!}\text{dim}\,V_\nu = \frac{\nu_1-\nu_2+1}{(\nu_1+1)!\,\nu_2!}.
\eeq

Inserting \eqref{eq:appc.1}, \eqref{eq:appc.2}, \eqref{eq:appc.3} in \eqref{eq:appc.0}, and observing that $\nu$ runs over all partitions such that $0\leq|\nu|\leq 2N$ gives
\beq\label{eq:appc.4}
\begin{split}
&\mathbb{E}^{(H)}_{N}\left[\det(t-M)^{2}\right]\\
& = \left(-\frac{1}{2N}\right)^NC_\lambda(2)D^{(H)}_{\lambda 0}\sum_{k=0}^N(-2N)^kt^{2k}\\
&\quad\times\Big[\sum_{j=0}^{\lfloor{\frac{k-1}{2}}\rfloor}\left(\frac{2k+1-4j}{(2k+1-2j)!(2j)!}-\frac{2k-1-4j}{(2k-2j)!(2j+1)!}\right)\frac{\left(\frac{N}{2}\right)!^2}{(\frac{N}{2}-k+j)!(\frac{N}{2}-j)!}\\
&\qquad\qquad\quad + \frac{1}{k!(k+1)!}\frac{\left(\frac{N}{2}\right)!^2}{\left(\frac{N}{2}-\frac{k}{2}\right)!^2}\mathbbm{1}_{k= 0\,\text{mod}\,2}\Big].
\end{split}
\eeq
for $N$ even. Similarly, for $N$ odd, one gets
\beq\label{eq:appc.5}
\begin{split}
&\mathbb{E}^{(H)}_{N}\left[\det(t-M)^{2}\right]\\\
& = \left(-\frac{1}{2N}\right)^NC_\lambda(2)D^{(H)}_{\lambda 0}\sum_{k=0}^N(-2N)^kt^{2k}\\
&\times\bigg[\sum_{j=0}^{\lfloor{\frac{k-2}{2}}\rfloor}\left(-\frac{2k-1-4j}{(2k-2j)!(2j+1)!}+\frac{2k-3-4j}{(2k-2j-1)!(2j+2)!}\right)\frac{\left(\frac{N-1}{2}\right)!\,\left(\frac{N+1}{2}\right)!}{\left(\left(\frac{N+1}{2}\right)-k+j\right)!\left(\left(\frac{N-1}{2}\right)-j\right)!}\\
&\qquad\qquad\quad +\frac{1}{(2k)!}\frac{\left(\frac{N-1}{2}\right)!}{\left(\left(\frac{N-1}{2}\right)-k\right)!}- \frac{1}{k!(k+1)!}\frac{\left(\frac{N-1}{2}\right)!\,\left(\frac{N+1}{2}\right)!}{\left(\left(\frac{N-1}{2}\right)-\frac{k-1}{2}\right)!^2}\mathbbm{1}_{k= 0\,\text{mod}\,1}\bigg].
\end{split}
\eeq
We have 
\beq
C_\lambda(2)D^{(H)}_{\lambda 0} =
\begin{cases}
\frac{N!(N+1)!}{\left(\frac{N}{2}\right)!^2},\qquad\quad\,\,\,\, \text{N even},\\
-\frac{N!(N+1)!}{\left(\frac{N-1}{2}\right)!\left(\frac{N+1}{2}\right)!},\quad \text{$N$ odd}.
\end{cases}
\eeq
The asymptotics of the ratio of the factorials are already discussed in App.~\ref{app:asymp_ratio_factorials}. For the sake of completion, here we again give the result for $p=1$,
\beq
\begin{split}
C_\lambda(2)D^{(H)}_{\lambda 0}&\sim e^{-N}(2N)^{N+1}\left[1+\frac{5}{6\,N}-\frac{11}{72\,N^2}+\frac{337}{6480\,N^3}+\frac{985}{31104\,N^4}-\frac{360013}{6531840\,N^5}\right.\\
&\left.\qquad\qquad\qquad-\frac{46723609}{1175731200\,N^6} +\frac{224766221}{1410877440\,N^7}+\frac{41757020981}{338610585600\,N^8}\right.\\
&\left.\qquad\qquad\qquad-\frac{889926952101377}{1005673439232000\,N^9}+O(N^{-10})\right],\, \text{$N$ even},\\
C_\lambda(2)D^{(H)}_{\lambda 0}&\sim -e^{-N}(2N)^{N+1}\left[1+\frac{1}{3\,N}+\frac{1}{18\,N^2}-\frac{31}{810\,N^3}-\frac{139}{9720\,N^4}+\frac{9871}{204120\,N^5}\right.\\
&\left.\qquad\qquad\qquad+\frac{324179}{18370800\,N^6} -\frac{8225671}{55112400\,N^7}-\frac{69685339}{1322697600\,N^8}\right.\\
&\left.\qquad\qquad\qquad+\frac{1674981058019}{1964205936000\,N^9}+O(N^{-10})\right],\quad \text{$N$ odd}.
\end{split}
\eeq
 Substituting the above asymptotic series in 
\beq
(2N)^{-1}e^{N-\frac{Nt^2}{2}}\mathbb{E}^{(H)}_{N}\left[\det(t-M)^{2}\right]
\eeq
 and taking the average over $N$ even and odd gives
\beq\label{eq:appc.6}
\begin{split}
&\lim_{N\rightarrow\infty}\frac{1}{2N}e^{N}\exp\left(-\frac{N t^2}{2}\right)\mathbb{E}^{(H)}_{N}\left[\det(t-M)^{2}\right] \\
=& 1 - \frac{1}{8}t^2 - \frac{1}{128}t^4 - \frac{1}{1024}t^6-\frac{5}{32768}t^8-\frac{7}{262144}t^{10} + O(t^{12}).
\end{split}
\eeq
The R.H.S. in \eqref{eq:appc.6} coincides with $\pi\rho_{sc}(t)$ up to $O(t^{10})$.
 \bibliographystyle{abbrv}
\bibliography{mom_clt}{}
\end{document}